\newtheorem{theorem}{Theorem}
\newtheorem{lemma}{Lemma}
\newtheorem{claim}{Claim}
\newtheorem{corollary}{Corollary}
\newtheorem{definition}{Definition}
\def\E{\mathbb{E}}
\def\R{\mathbb{R}}
\def\Z{\mathbb{Z}}
\DeclarePairedDelimiter{\abs}{\lvert}{\rvert}
\renewcommand{\Pr}{\mathbb{P}}
\newcommand{\polylog}{\mathrm{polylog}}
\newcommand{\eps}{\varepsilon}
\newcommand{\cost}{\textsc{cost}}
\newcommand{\cell}{C}
\newcommand{\grid}{\mathcal{C}}
\newcommand{\subgrid}{\mathcal{\widetilde{C}}}
\newcommand{\subcell}{\widetilde{C}}
\newenvironment{subproof}[1][\proofname]{%
  \begin{proof}[#1]%
}{%
  \end{proof}%
}
\newcommand{\NOCOMMENT}{}
    \newcommand{\snote}[1] {}
    \newcommand{\anote}[1] {}
    \newcommand{\dnote}[1] {}
\newcounter{note}[section]
\renewcommand{\thenote}{\thesection.\arabic{note}}
\newcommand{\snote}[1]{\refstepcounter{note}\textcolor{brown}{$\ll${\bf Sasho~\thenote:} {\sf #1}$\gg$\marginpar{\tiny\bf SN~\thenote}}}
\newcommand{\anote}[1]{\refstepcounter{note}\textcolor{magenta}{$\ll${\bf Andrey~\thenote:} {\sf #1}$\gg$\marginpar{\tiny\bf ABK~\thenote}}}
\newcommand{\dnote}[1]{\refstepcounter{note}\textcolor{cyan}{$\ll${\bf Dmitry~\thenote:} {\sf #1}$\gg$\marginpar{\tiny\bf DP~\thenote}}}
\newcommand{\cut}[1]{}
\title{Preconditioning for the Geometric Transportation Problem}
\author{Andrey Boris Khesin\\
  University of Toronto\\
  \url{andrey.khesin@mail.utoronto.ca}
  \and
  Aleksandar Nikolov\thanks{Supported by an NSERC Discovery Grant (RGPIN-2016-06333). Part of
    this work was done while this author was visiting the Simons Institute for the
    Theory of Computing semester on Bridging Continuous and Discrete
    Optimization, partially supported through NSF grant \#CCF-1740425}\\
  University of Toronto\\
  \url{anikolov@cs.toronto.edu}
  \and
  Dmitry Paramonov\\
  University of Toronto\\
  \url{dmitry.paramonov@mail.utoronto.ca}
}
\date{}
\begin{document}

\maketitle
\begin{abstract}
  In the geometric transportation problem, we are given a collection of points $P$ in $d$-dimensional Euclidean space, and each point is given a supply of $\mu(p)$ units of mass, where $\mu(p)$ could be a positive or a negative integer, and the total sum of the supplies is $0$. The goal is to find a flow (called a transportation map) that transports $\mu(p)$ units from any point $p$ with $\mu(p) > 0$, and transports $-\mu(p)$ units into any point $p$ with $\mu(p) < 0$. Moreover, the flow should minimize the total distance traveled by the transported mass. The optimal value is known as the transportation cost, or the Earth Mover's Distance (from the points with positive supply to those with negative supply). This problem has been widely studied in many fields of computer science: from theoretical work in computational geometry, to applications in computer vision, graphics, and machine learning.

  In this work we study approximation algorithms for the geometric transportation problem. We give an algorithm which, for any fixed dimension $d$, finds a $(1+\eps)$-approximate transportation map in time nearly-linear in $n$, and polynomial in $\eps^{-1}$ and in the logarithm of  the total supply. This is the first approximation scheme for the problem  whose running time depends on $n$ as $n\cdot \mathrm{polylog}(n)$. Our techniques combine the generalized preconditioning framework of Sherman, which is grounded in continuous optimization, with simple geometric arguments to first reduce the problem to a minimum cost flow problem on a sparse graph, and then to design a good preconditioner for this latter problem.
\end{abstract}

\section{Introduction}




In the \emph{Geometric Transportation problem}, we are given a set $P$ of $n$
points in $d$-dimensional Euclidean space, and a function $\mu:P \to
\Z$ that assigns a (positive or negative integer) weight to each
point, so that $\sum_{p \in P}{\mu(p)} = 0$. We call $\mu$ the
supply function.
We can think of each point $p \in P$ as either a pile of earth, or a
hole, and $\mu(p)$ gives, respectively, the amount of earth (if positive) or the
size of the hole (if negative). The constraint on $\mu$ means that the total space
in the holes equals the total amount of earth in the piles. Our goal
is to find the most efficient way to transport the earth to the
holes, where the cost of transporting a unit of mass from $p$ to $q$ equals the
distance $\|p - q\|_2$ it must travel, measured in the Euclidean
norm. More formally, 
 let $P_+ = \{p \in P: \mu(p) \ge 0\}$ be the
 ``piles'' and $P_- = \{q \in P: \mu(q) < 0\}$ be the ``holes''. We
want to solve the minimum cost flow problem given by the following
linear program
\begin{align*}
  \text{Minimize } &\sum_{p \in P_+, q \in P_-} f_{pq} \|p - q\|_2\\
  \text{subject to}&\\
  \forall p \in P_+: &\sum_{q \in P_-}{f_{pq}} = \mu(p),\\
  \forall q \in P_-: &\sum_{p \in P_+}{f_{pq}} = \mu(q),\\
  \forall p \in P_+, q\in P_-: &f_{pq} \ge 0.
\end{align*}
Above, the constraints enforce that all demands are satisfied,
i.e.~all the earth goes into holes, and no hole is overfilled. A
vector $f \in \R^{P_+ \times P_-}$ that satisfies these constraints is
called a \emph{transportation map} for $(P, \mu)$. 
The cost of an optimal transportation map is called the
\emph{transportation cost} of $(P, \mu)$, or also the Earth Mover's
Distance, or $1$-Wasserstein distance between the measures $\phi = \mu|_{P_+}$ and $\psi =
-\mu|_{P_-}$, and is denoted $\cost(P,\mu)$.

The geometric transportation problem is a discretized version of the
continuous optimal mass transportation problem of Monge and
Kantorovich. This is a classical problem in mathematics, with many
beautiful connections to partial differential equations, geometry, and
probability theory, among others~\cite{Villani-Topics}. The discrete
problem above has also found a large number of applications, for
example to shape matching~\cite{GiannV02,GraumanD04} and image
retrieval~\cite{RubnerTG00,QinCL04} in computer vision, and interpolation
between distributions in graphics~\cite{Bonneel11}. It is also of
intrinsic interest as a natural problem in computational geometry.

The geometric transportation problem can be modeled as a minimum cost
flow problem in the complete bipartite graph with bi-partition $P_+
\cup P_-$. This graph is dense, and solving the minimum cost flow
problem exactly using the algorithm of Lee and Sidford would take time
$O(n^{2.5}\ \polylog(U))$,\footnote{We allow the implicit constants in
  the asymptotic notation to depend on the dimension $d$, except when
  the asymptotic notation is used in an exponent.} where $U =
\sum_{p \in P_+}\mu(p)$. If instead we settle for an
$(1+\eps)$-approximation, then the recent algorithm of
Sherman~\cite{Sherman17} gives running time $O(n^{2+o(1)}\eps^{-2})$.
However, there exist faster (in certain parameter regimes)
approximation algorithms which exploit the geometric structure of the
problem: after a long line of work, Sharathkumar and
Agarwal~\cite{SharathkumarA12-SODA} gave an algorithm that computes a
$(1+\eps)$-approximation in time $O(n\sqrt{U}\ \polylog(U,\eps,n))$,
and, recently, Agarwal et al.~\cite{AgarwalFPVX17} gave several
algorithms with different trade-offs, among them an
$(1+\eps)$-approximation algorithm running in time
$O(n^{3/2}\eps^{-d}\ \polylog(U,n))$. A nearly linear time
$(1+\eps)$-approximation algorithm was given by Sharathkumar and
Agarwal~\cite{SharathkumarA12-STOC} in the special case of unit
supplies; their algorithm runs in time $O(n\eps^{-O(d)}\
\polylog(n))$. Until now, no such algorithm was known for general
supply functions. 

A related line of work focuses on estimating the transportation cost,
without necessarily computing a transportation map. An influential
paper in this direction was that of Indyk~\cite{Indyk07}, whose
algorithm gives a constant factor approximation to the transportation
cost in the case of unit supplies, in time $O(n\ \polylog(n))$. This
result was extended to arbitrary supplies and to approximation factor
$1+\eps$ by Andoni et al.~\cite{AndoniNOY14}, whose algorithm runs in
time $O(n^{1+o(1)})$, with the $o(1)$ factor in the exponent hiding
dependence on $\eps$.  The result of Sherman~\cite{Sherman17},
mentioned above, together with the existence of sparse Euclidean
spanners, implies an $O(n^{1+o(1)}\eps^{-O(d)})$ time algorithm to
estimate the transportation cost.\footnote{We are indebted to an
  anonymous reviewer for this observation.} Indeed, using the
well-separated pair decompositions of Callahan and
Kosaraju~\cite{CallahanK93} (see also Chapter 3 of
\cite{Sariel-book}), one can construct in time $O(n\log n +
n\eps^{-d})$ a graph $G$ on $P$ with $O(n\eps^{-d})$ edges, so that
the shortest path in $G$ between any $p,q\in P$ has total length at
most $(1+\eps)\|p-q\|_2$. It is then simple to construct a minimum
cost flow problem on $G$ whose optimal value equals the transportation
cost of $(P, \mu)$ up to a factor of $1+\eps$. This minimum cost flow
problem in $G$ can then be solved with Sherman's algorithm to achieve
the claimed running time. It is not immediately clear, however, how to
use the flow in $G$ to construct a transportation map of comparable
cost in nearly linear time. 

There is a related line of
work~\cite{PhillipsA06,AgarwalS14,Altschuler-W2} that studies the
transportation problem when the cost of transporting mass from $p$ to
$q$ is $\|p-q\|_2^r$, giving the $r$-Wasserstein distance. This
appears to be a more challenging problem, and we do not address it
further.

\subsection{Our Results and Methods}

Let us recall that
the aspect ratio (or spread) of a pointset $P$ is defined as the ratio of its
diameter to the smallest distance between two distinct points.
Our main result is a nearly linear time $(1+\eps)$-approximation
algorithm for the geometric transportation problem, as captured by the
following theorem.
\begin{theorem}\label{thm:main}
  There exists a randomized algorithm that on input an $n$-point set
  $P\subset \mathbb{Q}^d$ with aspect ratio $\Delta$, and supply
  function $\mu:P \to \mathbb{Z}$, runs in  time $O(n\eps^{-O(d)}
  \log(\Delta)^{O(d)}\log n)$, and with probability at least $1/2$
  finds a transportation map with cost at most $(1+\eps)~\cdot~\cost(P,\mu)$.

  There also exists a randomized algorithm that on input an
  $n$-point set $P \subset \mathbb{Q}^d$ and supply function $\mu:P
  \to \mathbb{Z}$ such that $U = \sum_{p \in P_+}{\mu(p)}$, runs in
   time $O(n\eps^{-O(d)} \log(U)^{O(d)}\log(n)^2)$, and with
  probability at least $1/2$ finds a transportation map with cost at
  most $(1+\eps)~\cdot~\cost(P,\mu)$.
\end{theorem}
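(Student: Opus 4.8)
The plan is to reduce the geometric transportation problem to a minimum cost flow problem on a sparse graph with polylogarithmically many "scales," and then apply Sherman's generalized preconditioning framework after exhibiting a good preconditioner for the resulting flow problem. First, I would impose a randomly shifted hierarchical grid (a quadtree) on a bounding cube of $P$: at each scale $i$, space is partitioned into cells of side length roughly $2^{-i}$ times the diameter, from the top scale down to a scale fine enough that each cell contains at most one point of $P$ (this is where the $\log \Delta$, or in the second statement the $\log U$ after a preliminary snapping/merging of nearby points, enters the depth). The graph $G$ has a vertex for each nonempty cell at each scale, plus the points of $P$ as the leaves, with edges connecting each cell to its parent and to the points/children it contains; the length of a cell-to-parent edge is set proportional to the side length of the parent. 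Standard arguments about randomly shifted quadtrees show that routing a unit of mass from $p$ to $q$ up to a common ancestor and back down costs, in expectation over the random shift, $\Theta(\|p-q\|_2 \cdot \log(\cdot))$ and never less than $\Omega(\|p-q\|_2)$; by subdividing each edge further (or using a more careful $\eps$-net style construction in each cell) one gets that shortest-path distances in $G$ approximate Euclidean distances within $1+\eps$ in expectation, so the optimal min cost flow on $G$ equals $\cost(P,\mu)$ up to $1+\eps$ and the number of edges is $O(n \eps^{-O(d)} \log(\Delta))$.

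The second and main step is to run Sherman's algorithm on this flow LP. Sherman's framework solves $\min \|f\|$ over $Af = b$ (with appropriate norms) in time $\tilde O(m)$ times the number of iterations, where the iteration count is controlled by the \emph{condition number} of a preconditioner $P$ for the constraint matrix $A$ — informally, $P$ must map the image of $A$ to something where $\ell_\infty$-type error is comparable to the transportation-cost objective. For our graph $G$, because it is a forest-like hierarchical structure (each cell has a unique parent), the min cost flow sending any demand vector $b$ has an essentially canonical solution: push each cell's net surplus to its parent. This means $A$ has a triangular structure with respect to the scale ordering, and the natural preconditioner is the linear map that, given a demand vector supported on the leaves, outputs the vector of net surpluses at every cell (a telescoping/prefix-sum along the tree). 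I would show this preconditioner has condition number $\mathrm{poly}(\eps^{-1}, \log \Delta)^{O(d)}$: the "$\log^{O(d)}$" loss comes from the depth of the quadtree and the $\eps^{-O(d)}$ branching inside cells, since an error of mass at one cell can propagate through all its ancestors, and the objective charges each such propagation by an edge length that shrinks geometrically but whose sum over a root-to-leaf path is only a $\log$ factor larger than the largest term.

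Combining: Sherman's algorithm with this preconditioner runs in $\tilde O(m) \cdot \mathrm{poly}(\eps^{-1}\log\Delta)^{O(d)} = n \eps^{-O(d)} \log(\Delta)^{O(d)} \log n$ time and returns a near-optimal \emph{fractional} flow $f$ on $G$; a final step converts $f$ back to an honest transportation map on $P \times P$. The conversion is done greedily along the tree in one bottom-up pass: at each cell, match incoming mass to outgoing mass arbitrarily (this is where the near-linear running time of the decoding matters, in contrast to the spanner-based approach mentioned in the introduction, which lacks such a decoding), and the triangle inequality together with the distance-approximation guarantee of $G$ shows the resulting map costs at most $(1+\eps)\cost(P,\mu)$. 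For the second statement of the theorem, one first collapses groups of points that are within distance $\cost(P,\mu)\cdot \eps / (nU)$ of each other (found via a grid), which loses only an $\eps$ fraction of the cost, reduces the effective aspect ratio to $\mathrm{poly}(n,U)$, and hence replaces $\log\Delta$ by $O(\log n + \log U)$ at the price of one extra $\log n$ factor from the bucketing; I expect verifying that this snapping interacts correctly with the supply function (so the collapsed instance still has balanced integer supplies and its optimum is comparable) to be the fiddliest part, while the genuine conceptual obstacle is the condition-number bound for the preconditioner in the second step.
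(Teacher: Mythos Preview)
Your proposal is correct and follows essentially the same route as the paper: a randomly shifted quadtree with $\eps$-nets inside each cell to build the sparse graph, a preconditioner that outputs (scale-weighted) net demand in every subcell so that Sherman's framework applies, and a level-by-level cancellation pass to turn the resulting flow into a transportation map, with the second statement obtained from the first via a known aspect-ratio reduction. The one place your intuition is slightly off is the condition number: the paper obtains $\kappa = O(\log(\Delta)/\eps_0)$ with \emph{no} $d$-dependent exponent---the $\eps^{-O(d)}$ and $\log(\Delta)^{O(d)}$ factors in the final running time come from the graph size $|E|=O(n\eps_0^{-2d}\log\Delta)$ and the choice $\eps_0 \approx \eps/\log\Delta$, not from $\kappa$---so the step you flag as the main obstacle is in fact cleaner than you anticipate.
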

In constant dimension, the dependence of the running time on the
aspect ratio $\Delta$ or total supply $U$ is polylogarithmic, and the dependence on
the approximation $\eps$ is polynomial. The dependence on $n$ is just
$O(n\log n)$ (respectively $O(n\log(n)^2)$). This is in contrast with
prior work which either had a much larger dependence on $n$, or a
polynomial dependence on $U$. 

In the proof of this result, we employ a combination of geometric
tools, and tools from continuous optimization developed for the
general minimum cost flow problem. As a first step, we reduce the
transportation problem to an (uncapacitated) minimum cost flow problem
on a random sparse graph. This construction is closely related to the
prior work of Sharathkumar and Agarwal in the case of unit
capacities~\cite{SharathkumarA12-STOC}, and also to the estimation
algorithm for the transportation cost in~\cite{AndoniNOY14}. In
particular, the sparse graph and minimum cost flow instance we
construct are a simplification of the minimum cost flow instance
in~\cite{AndoniNOY14}. Together with the recent work of
Sherman~\cite{Sherman17}, this reduction is enough to get a
$O(n^{1+o(1)}\eps^{-O(d)})$ time algorithm to estimate the transportation
cost. As mentioned above, this running time can also be achieved using
a Euclidean spanner, and the random sparse graph we use can, in fact,
be seen as a randomized spanner. Our graph, however, has a nice
hierarchical structure not shared by other spanner
constructions. In particular, there is a quadtree of the input point
set such that any edge leaving a cell of the quadtree goes to either a
sibling cell or a parent cell. This property is useful both for
improving the running time further, and for computing a transportation
map.

To further improve the running time, we open up Sherman's elegant
framework, and combine it with classical geometric
constructions. Sherman's framework is based on repeatedly finding a
flow which is approximately optimal, and approximately feasible. This
can be achieved quickly, for example using the multiplicative weights
update method~\cite{AroraHK12}. Once an approximately feasible flow is
found, the supplies and demands are updated to equal the
still-unrouted part of the original supplies and demands, and the
problem is solved again. Finally, all the flows computed so far are
added up; if the resulting flow is very close to being feasible, it
can be ``rounded off'' to a truly feasible one with a small loss in
the cost. Sherman showed that, if the problem being solved is
sufficiently well conditioned in the appropriate way, then this
process of repeatedly finding approximately feasible flows converges
very fast to a feasible one.

Unfortunately, most minimum cost flow problems are not
well-conditioned in their natural formulations. For this reason, we
need to find a \emph{preconditioner}, i.e.~a linear transformation, to
be applied to the constraints of the minimum cost flow problem, which
makes the problem well-conditioned. This is where we can further
exploit the structure of the geometric transportation
problem. Constructing a preconditioner in our random sparse graph is
closely related to estimating the transportation cost by an embedding
into $\ell_1$.
 We exploit ideas from the known embeddings of
transportation cost into $\ell_1$~\cite{Charikar02,IndykThaper03} to
construct a preconditioner with condition number that depends
polynomially on the approximation factor and the logarithm of the
aspect ratio. The insight that these simple and well-known techniques
can be repurposed to give high-quality preconditioners is one of our
main conceptual contributions.

The final algorithm does not require any specialized data structures
and is relatively simple: once the sparse graph is constructed, and
the preconditioner is applied, we iteratively run the
multiplicative weights method several times, and, once we are close
enough to a feasible flow, greedily round off the errors. This is
sufficient for estimating the transportation cost, but does not yet
give us a transportation map, because we transformed the original
problem into a different minimum
cost flow problem on a sparse graph. In the final section of our paper
we show a simple way to efficiently transform a flow on this sparse
graph into a transportation map, without increasing its cost.

\section{Notation and Basic Notions}

We use the notation $\|x\|_p$ for the $\ell_p$ norm of a vector $x \in
\R^d$: $\|x\|_p = \left(\sum_{i = 1}^d{|x_i|^p}\right)^{1/p}$. We will
assume that the input $P \subset \mathbb{Q}^d$ to our problem lies in
$[0,\Delta]^d$, and that the smallest Euclidean distance between any
two points is at least $1$. Since any shifts, rotations, or dilations
of the points do not change the problem, this assumption is equivalent
to assuming that the aspect ratio of $P$, i.e.~the ratio between the
diameter of $P$ and the smallest distance between two distinct points,
is bounded between $\Delta$ and $\sqrt{d}\Delta$. 

In fact, at the cost of a small increase in the approximation factor,
we can reduce to this case of bounded aspect ratio. The reduction
produces a point set $P$ for which $\Delta$ is bounded by a factor
that depends on $U = \sum_{p \in P_+}{\mu(p)}$. 

\begin{lemma}[\cite{Indyk07,AndoniNOY14}]\label{lm:reduce-aspect}
  Suppose that there exists a function $T$, increasing in all its
  parameters, and an algorithm which, for any $\eps, \delta < 1$, on
  input of size $n$ in $[0, \Delta]^d \cap \Z^d$, runs in time $O(n
  T(\Delta, \eps, \delta))$, and with probability $1-\delta$ computes
  a $1+\eps$ approximation to the geometric transportation
  problem. Then there exists an algorithm that takes any input $P
  \subset \R^d$ of size $n$ and $\mu:P \to \Z$ such that $U = \sum_{p
    \in P_+}{\mu(p)}$, runs in time $O(n T(cdU/\eps^2, \eps, \delta
  n))$ for an absolute constant $c$, and, with probability $1 -
  \delta$, achieves an approximation of $1 + O(\eps)$ for the geometric
  transportation problem on $P$ and $\mu$.
\end{lemma}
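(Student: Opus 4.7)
The plan is a two-stage reduction: a deterministic quantization that raises the minimum inter-point distance to $1$, followed by a randomly shifted grid partition that bounds each sub-instance's diameter. The aspect ratio $cdU/\eps^2$ in the conclusion is exactly what falls out once the grid side length is tuned to make the cross-cell rerouting cost only an $O(\eps)$ fraction of $\cost(P,\mu)$.

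\textbf{Stage 1 (quantization).} I would first compute, in nearly linear time, a constant-factor upper estimate $\hat c \ge \cost(P,\mu)$---for example, via the estimator of \cite{Indyk07} or the cost of a greedy nearest-neighbor matching. Setting $s \eqdef c_1 \eps \hat c/(dU)$, I snap each coordinate of every point to the nearest multiple of $s$ and merge coincident points by summing supplies. Each of the $U$ mass units is displaced by at most $\sqrt d\,s/2$, so the transportation cost changes by at most $\sqrt d\,sU = O(\eps\hat c) = O(\eps\,\cost(P,\mu))$. Rescaling by $1/s$ yields an instance with integer coordinates and distinct points at distance at least $1$.

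\textbf{Stage 2 (random partition).} Draw a uniform shift $v\in[0,R)^d$ with $R\eqdef c_2 dU/\eps^2$ and use the grid $v+R\,\Z^d$ to split the rescaled point set into at most $n$ non-empty cells. Each cell is turned into a balanced sub-instance by adding an artificial point at the cell's center carrying the negative of the cell's net $\mu$. Each sub-instance lives in a $\sqrt d\,R$-diameter box with minimum distance at least $1$, so its aspect ratio is at most $\sqrt d R = O(dU/\eps^2)$, and the hypothesized algorithm handles it in time $O(|P_i|\,T(cdU/\eps^2,\eps,\delta))$. The residual instance on the $O(n)$ cell-center atoms is balanced, has total supply at most $U$, and is resolved by recursing on the same construction at the next larger scale; the recursion terminates after $O(\log\mathrm{diam})$ rounds and its cost is absorbed into the same expectation bound.

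\textbf{Main obstacle and error analysis.} The crux is to show that the concatenated map has expected cost $(1+O(\eps))\,\cost(P,\mu)$. Fix an optimal plan $f^*$ for the rescaled instance; the overhead is at most the cost of rerouting every edge $pq$ whose endpoints end up in different cells through their respective cell centers. A coordinate-wise union bound gives $\Pr[pq\text{ separated}]\le d\|p-q\|_2/R$, while the detour length is at most $O(R)$ per unit of cut mass, yielding only the loose bound $O(d)\cdot\cost(P,\mu)$. The correct tighter bound uses that, conditional on separation, the expected detour is $O(\|p-q\|_2)$ rather than $O(R)$---the same calculation underlying the $\ell_1$-embedding approximations of \cite{Charikar02, IndykThaper03}. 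Plugging $R=c_2 dU/\eps^2$ in and summing makes the cross-cell expectation $O(\eps)\cdot\cost(P,\mu)$; combining with Stage 1 gives $1+O(\eps)$, which rescales to $1+\eps$ by running the black box with parameter $\eps/c$. Finally, a union bound over the $O(n)$ black-box calls, each failing with probability $\delta$, contributes the ``$\delta n$'' slack in the statement.
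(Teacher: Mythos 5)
The paper does not prove this lemma (it is cited to \cite{Indyk07,AndoniNOY14}), so I am comparing against the standard reduction those works use. Your Stage 1 (estimate $\hat c$, snap to resolution $s=\Theta(\eps\hat c/(dU))$) is correct. The gap is in the error analysis of Stage 2. Your pivotal claim --- that conditional on the grid separating $p$ and $q$, the expected detour through the two cell centers is $O(\|p-q\|_2)$ --- is false. Conditioned on separation, each endpoint is still at expected distance $\Theta(R)$ from its cell's center (lying near a cutting hyperplane does not place a point near the center), so the conditional detour is $\Theta(R)$, and the unconditional expected overhead per edge is $\Theta\bigl(f_e\cdot\frac{\|p-q\|_1}{R}\cdot R\bigr)=\Theta(d)\,f_e\|p-q\|_2$. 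This is exactly the ``loose'' $O(d)\cdot\cost(P,\mu)$ bound you were trying to beat; the calculation you invoke from \cite{Charikar02,IndykThaper03} is the source of the $\Theta(\min\{\log\Delta, d\})$-type distortion of those embeddings and cannot yield $1+O(\eps)$. A concrete failure case: $n/2$ pairs $\{+1,-1\}$ at unit separation, spaced far apart; each pair is cut with probability $\Theta(d/R)$ and then incurs detour $\Theta(R)$ instead of $1$, for an expected overhead of $\Theta(d)\cdot\cost$. Your fallback of recursing on the cell-center residuals over $O(\log\mathrm{diam})$ scales does not repair this (each level pays the same constant-factor toll, and the recursion breaks the single aspect-ratio bound $cdU/\eps^2$).

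The missing idea is integrality. An optimal \emph{integral} transportation map puts $f_{pq}\ge 1$ on every support edge, so every support edge has length at most $\cost(P,\mu)\le\hat c$, and
\[
\Pr[\text{some support edge is cut}]\;\le\;\sum_{e\in\mathrm{supp}(f^*)}\frac{\|p_e-q_e\|_1}{R}\;\le\;\frac{\sqrt d}{R}\sum_e f^*_e\,\|p_e-q_e\|_2\;=\;\frac{\sqrt d\,\cost(P,\mu)}{R}\;\le\;\sqrt d\,\eps
\]
for $R=\hat c/\eps$. Thus with probability $1-O(\sqrt d\,\eps)$ \emph{no} optimal edge is cut: every cell is internally balanced, no artificial points or residual instance are needed, and the per-cell optima sum to exactly $\cost(P,\mu)$, while each cell has diameter $O(\sqrt d\,\hat c/\eps)$ and minimum distance $s$, giving aspect ratio $O(dU/\eps^2)$ as claimed. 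The event that some cell is imbalanced is detectable, so it can be folded into the failure probability (or the shift resampled). You should also be careful that the union bound over black-box calls uses per-call failure probability $\delta/n$, and that the constant-factor estimator $\hat c$ must itself be obtained without circularity; but the substantive defect in your write-up is the false conditional-detour claim, without which your Stage 2 only proves an $O(d)$-approximation-preserving reduction.
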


\snote{Maybe sketch this for the full version, but not for the submission.}

With this lemma, the second algorithm in Theorem~\ref{thm:main}
follows from the first one. For this reason, we can focus our
presentation on the case of bounded aspect ratio.

For a set $V$, we call $f \in \R^{V \times V}$ a \emph{flow} if it
is anti-symmetric, i.e. $f_{uv} = -f_{vu}$ for every
$u,v \in V$. Intuitively, we think of $f_{uv} > 0$ as flow
going in the direction from $u$ to $v$. For a graph $G = (V, E)$, we
define a flow on $G$ to be a flow $f$ supported on $E$, i.e.~one such
that $f_{uv} = 0$ for any $(u,v) \not \in E$. The \emph{divergence} of
a flow $f$ at $u$ is the quantity $\sum_{v \in V}{f_{uv}}$, i.e.~the
excess of the flow leaving $u$ over the flow entering $u$. When the
divergence at $u$ is $0$, we say that \emph{flow conservation} is
satisfied at $u$. 

In general, we assume that our graphs are stored in the adjacency list
representation, and that flow values are stored with the adjacency
list.

\section{Reduction to Minimum Cost Flow on a Sparse Graph}
\label{sect:reduction}

The first step in our algorithm is to reduce the geometric
transportation problem, which is naturally modeled as a minimum cost
flow problem on a complete bipartite graph, to another minimum cost
flow on a sparse random graph. The following construction is a
simplified version of one used in~\cite{AndoniNOY14}.

\subsection{Graph Construction}

Recall that $P \subset [0,\Delta^d]$.  We start by constructing a grid
containing all the points in $P$ as follows. We sample a uniformly
random point $x \in [0,\Delta]^d$, and define the cell $\cell_0 =
[-\Delta, \Delta]^d + x$.  Note that all points in $P$ are in
$\cell_0$.  We say that $\cell_0$ is on {\it level 0}.  The set of
cells on level $\ell$ is labelled $\grid_\ell$ and is constructed by
taking each cell $\cell$ in $\grid_{\ell-1}$ and dividing it into
$2^d$ equally-sized cubes of side length half the side length of
$\cell$, and retaining those that contain points in $P$.  We say that
$\cell$ is the {\it parent} of the cells into which it was divided.
Conversely, those $2^d$ cells are called the {\it children} of $\cell$.
This division is continued until level $L$, where all points in $P$
lie in different cells in $\grid_L$.  With probability $1$, no point
of $P$ lands on the boundary of any cell, so we will ignore this
event.

Since the side length of any cell on level $\ell$ is half of the side
length of a cell on level $\ell-1$, any cell in $\grid_\ell$ has side
length $2^{1-\ell}\Delta$, and we refer to this length as $\Delta_l$.
Also note that, since the closest two points in $P$ are at least distance  $1$
apart, we have $L\leq\log_2(2\sqrt{d}\Delta)$, so $L = O(\log\Delta)$.
Moreover, any point $p \in P$ lies in at most $L+1$ cells, one per level,
and, since in $\grid_\ell$ we only retain cells that contain points in
$P$, we have that $|\grid_0 \cup \ldots \cup \grid_L| \le n(L+1) =
O(n\log\Delta)$. 

Let $\eps_0$ be a small number such that $\eps_0^{-1}$ is an {\it
  even} integer. This $\eps_0$ is not the same as the value used when
we speak of a $(1+\eps)$-approximation, although the two are related,
as we will see.  Next, for each $\ell \in \{0, \ldots, L$,  we take
each cell $\cell \in \grid_\ell$, and we divide it into $\eps_0^{-d}$
{\it subcells}, each of side length $\eps_0\Delta_l$. The set of all
such subcells of all $\cell \in \grid_\ell$ is
denoted $\subgrid_\ell$.  To each subcell we associate a {\it net
  point} at its centre and we denote the set of all net points on
level $\ell$ as $N_\ell$. See Figure~\ref{fig:grid} for an illustration. Note that
$|N_\ell|={|\grid_\ell|}{\eps_0^{-d}}$, so $|N_0 \cup \ldots \cup
N_L| \le {n(L+1)}{\eps_0^{-d}} = O({n}{\eps_0^{-d}}\log\Delta)$.

\begin{figure}[tp]
  \centering
  \includegraphics{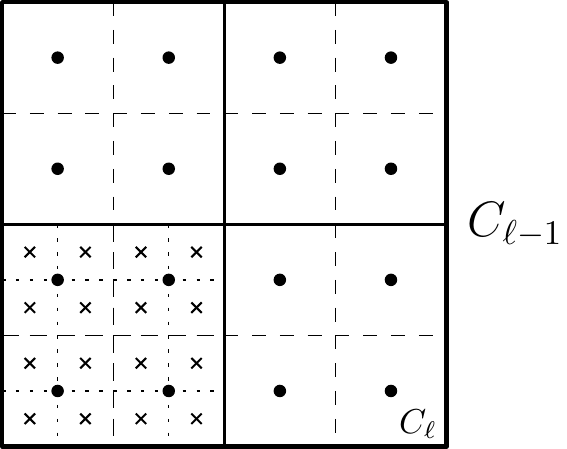}
  \caption{A cell $C_{\ell-1}$ on level $\ell-1$ and the four cells on level $\ell$ contained in it. The subcells of $C_{\ell-1}$ are shown with dashed lines, and the corresponding net points in $N_{\ell-1}$ with black dots. The subcells of one of the cells on level $\ell$ are shown with dotted lines, and the corresponding net point in $N_\ell$ with crosses. }
  \label{fig:grid}
\end{figure}

If $p$ is a point in $\cell_0$, let $\cell_\ell(p)\in\subgrid_\ell$ be the unique subcell on
level $\ell$ that contains $p$.
Similarly, let $N_\ell(p)$ be the net point on level $\ell$ closest to
$p$. Equivalently, $N_\ell(p)$ is the center of $\cell_\ell(p)$.
We say that $N_\ell(p)$ and $N_{\ell-1}(p)$ are each other's {\it child}
and {\it parent}, respectively.

We are now ready to construct our graph $G=(V,E)$.  We set $V=P\cup
N_0\cup N_1\cup\dots\cup N_L$, the set of all of the points in the
original problem and all of the netpoints we have constructed.  By the
previous discussion, $|V| = O({n}{\eps_0^{-d}}\log\Delta)$.  We build
our graph using three types of edges.  The first set, $E_1$, connects
points $p$ in $P$ to their closest net point on level $L$,
i.e.~$E_1=\{(p,N_L(p)):  p\in P\}$.  The size of this set is
$|E_1| = n$. The second set of edges, $E_2$, connects all net points
of subcells of a given cell pairwise.  Thus, 
\[
E_2=\{(u,v):\ell\in\{0,1,\dots,L\},\cell\in\grid_\ell,
u,v\in N_\ell\cap\cell~\text{s.t.}~u\neq v\}.
\] 
Since $|\grid_\ell|
\le n$, and, for any $\cell\in\grid_\ell$, $|N_\ell \cap \cell| =
\eps_0^{-d}$, we have that $|E_2| \le (L+1)n\eps_0^{-2d} =
O(n\eps_0^{-2d}\log\Delta)$.  The last set of edges, $E_3$, connects
net points to their parent net points, i.e.
\[
E_3=\{(N_{\ell-1}(u),u): \ell\in\{1,2,\dots,L\}, u \in N_{\ell}\}.
\] 
The size of $E_3$ is less than the total number of net
points, so $|E_3| = O({n}{\eps_0^{-d}}\log\Delta)$.  Then, the set of
edges is defined as $E=E_1\cup E_2\cup E_3$, and this completes the
description of $G = (V,E)$. The number of edges is
dominated by the size of $E_2$, and is in
$O(n\eps_0^{-2d}\log\Delta)$.  We will be calculating distances
between points along paths in this graph, and we define the
distance/cost of any edge to equal the Euclidean distance between the
two endpoints of the edge.

For the rest of the paper we will use $G$ to refer to the random graph
constructed above. Having stored the random shift of $\cell_0$, we can
enumerate the vertices in time $O({n}{\eps_0^{-d}}\log\Delta)$ by
going through each point in $P$ and checking which cells it lies
in. We can then store the vertices in a static
dictionary~\cite{FKS84}. Similarly, we can construct the adjacency
lists of the vertices in time $O(n\eps_0^{-2d}\log\Delta)$. 

\subsection{Embedding Euclidean Distance into a Sparse Graph}
\label{sect:embedding}

\newcommand{\dist}{\mathrm{dist}}

Let $\dist_G(u,v)$ be the shortest path distance between two nodes $u$
and $v$ of $G$, i.e.
\[
\dist_G(u,v) = \min \sum_{i = 1}^{k} \|u_i -u_{i-1}\|_2,
\]
 with the minimum taken over paths $u = u_0, \ldots, u_k
= v$ connecting $u$ and $v$ in $G$.  Our goal for this section is to
show that, for any two $p$ and $q$ in $P$, the expected value of
$\dist_G(u,v)$ is close to the Euclidean distance between
them. In other words, we want to show that our random graph
construction provides a randomized embedding of Euclidean distance
into the shortest path metric of a sparse graph. This is
similar to the embeddings used in \cite{AndoniNOY14} and
\cite{SharathkumarA12-SODA,SharathkumarA12-STOC}, and can be seen as a
randomized spanner construction, in which the graph $G$ only
needs to have low stretch in expectation. 

For two points $p$ and $q$ in $P$, we define $\ell(p,q)$ to be the
level of the smallest cell containing both $p$ and $q$, which can
depend on the random shift of the grid.  This means that
$\cell_{\ell(p,q)}(p)=\cell_{\ell(p,q)}(q)$, but
$\cell_{\ell(p,q)+1}(p)\neq\cell_{\ell(p,q)+1}(q)$.  The following
definition will be useful.

\begin{definition}
  Given two points $p,q\in P$, the {canonical path} between $p$ and
  $q$ in the random graph $G$ consists of the edges 
  \begin{multline*}
  \{(N_\ell(p),  N_{\ell-1}(p):  \ell(p,q) + 1 \le \ell \le L\}
  \cup
  \{(N_{\ell(p,q)}(p), N_{\ell(p,q)}(q))\}
  \\
  \cup
  \{(N_\ell(q),  N_{\ell-1}(q): \ell(p,q) + 1 \le \ell \le L\}
  \cup
  \{(N_L(p), p), (N_L(q),q)\}.
  \end{multline*}
\end{definition}

The canonical path could be much longer than $\|p-q\|_2$ because, if
$p$ and $q$ are two nearby points that are separated by a grid line,
the canonical path will be much larger than the Euclidean distance.
However, when $p$ and $q$ are very close together, the likelihood
that such a grid line will fall between them is low.  To
formalize this intuition, we need the following well-known bound on
the probability that $p$ and $q$ will be split at a certain level of
the tree, which is elementary, and goes back at least to Arora's work
on Euclidean TSP~\cite{Arora98}.

\begin{lemma}\label{lm:sep-prob}
Over the random shift of $\cell_0$, $\forall p,q\in P$,
\[\Pr(\cell_\ell(p)\neq\cell_\ell(q))\le
\frac{\|p-q\|_1}{\Delta_\ell} 
\le \frac{\sqrt{d}\|p-q\|_2}{\Delta_\ell}.\]
\end{lemma}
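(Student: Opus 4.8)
The plan is to analyze the random shift coordinate by coordinate. Write $p = (p_1,\dots,p_d)$ and $q = (q_1,\dots,q_d)$, and let $x = (x_1,\dots,x_d)$ be the random shift used to define $\cell_0 = [-\Delta,\Delta]^d + x$. The key observation is that the grid $\grid_\ell$ at level $\ell$ partitions $\cell_0$ into axis-aligned cubes of side length $\Delta_\ell = 2^{1-\ell}\Delta$, so that two points lie in the same cell of $\grid_\ell$ if and only if they lie in the same cell in \emph{every} one of the $d$ coordinate axes. Consequently, by a union bound over the $d$ coordinates,
\[
\Pr(\cell_\ell(p) \neq \cell_\ell(q)) \le \sum_{i=1}^d \Pr\big(\text{$p$ and $q$ are separated in coordinate $i$ at level $\ell$}\big).
\]
So it suffices to bound each one-dimensional separation probability by $|p_i - q_i|/\Delta_\ell$, and then summing gives $\|p-q\|_1/\Delta_\ell$, with the second inequality of the statement following from the standard norm comparison $\|p-q\|_1 \le \sqrt{d}\,\|p-q\|_2$.

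For the one-dimensional bound, I would fix a coordinate $i$ and observe that the level-$\ell$ grid lines in that coordinate form an arithmetic progression with common difference $\Delta_\ell$ (the endpoints of the level-$0$ interval $[-\Delta+x_i, \Delta+x_i]$ subdivided $\ell$ times by halving). Since $x_i$ is uniform on $[0,\Delta]$ and $\Delta_\ell \le 2\Delta$ for $\ell \ge 0$... more carefully: the offset of this arithmetic progression, reduced modulo $\Delta_\ell$, is itself uniformly distributed on an interval of length $\Delta_\ell$ — this is because $x_i$ ranges over an interval of length $\Delta = \Delta_\ell \cdot 2^{\ell-1}$, which is an integer multiple of $\Delta_\ell$ when $\ell \ge 1$, and for $\ell = 0$ the two points are never separated (they both lie in $\cell_0$), so the claimed bound is trivial. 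Hence the grid lines in coordinate $i$ behave like a uniformly-shifted lattice of spacing $\Delta_\ell$. The points $p,q$ are separated in coordinate $i$ precisely when one of these lines falls strictly between $p_i$ and $q_i$; since the interval between them has length $|p_i - q_i|$ and the lines are spaced $\Delta_\ell$ apart with a uniform random phase, the probability that a line lands in this interval is exactly $\min(1, |p_i - q_i|/\Delta_\ell) \le |p_i-q_i|/\Delta_\ell$.

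Putting these together yields the lemma. The only subtlety — and the part I would be most careful about — is verifying that the phase of the level-$\ell$ grid really is uniform modulo $\Delta_\ell$ as a function of the random shift $x$; this relies on $\Delta$ being an integer multiple of $\Delta_\ell$, i.e.\ on $\ell \ge 1$, and on the deterministic halving structure of the quadtree (no additional randomness is introduced at deeper levels, so each level's grid is a refinement determined entirely by $x$). Everything else is a routine union bound and the standard inequality $\|p-q\|_1 \le \sqrt d\,\|p-q\|_2$. I do not expect any genuine obstacle here; the lemma is classical (as the text notes, it goes back to Arora), and the coordinate-wise decomposition makes it essentially a one-line argument once the one-dimensional case is in hand.
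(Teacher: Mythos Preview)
The paper does not actually supply a proof of this lemma; it is stated as well-known and attributed to Arora, so there is nothing to compare your argument against. Your coordinate-by-coordinate union bound, followed by the observation that for $\ell \ge 1$ the level-$\ell$ grid offset is uniform modulo $\Delta_\ell$ (because $\Delta = 2^{\ell-1}\Delta_\ell$ is an integer multiple of $\Delta_\ell$), is exactly the standard argument and is correct. The $\ell = 0$ case, the one-dimensional phase computation, and the final Cauchy--Schwarz step $\|p-q\|_1 \le \sqrt{d}\,\|p-q\|_2$ are all handled properly.
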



Using this standard lemma, we can prove the bound on the distortion
between Euclidean distance and the expected shortest path distance in
$G$. The proof, which is standard, is deferred to the appendix.

\begin{lemma}\label{lm:distort}
  Let $p,q\in P$. Then, for any shift of $\cell_0$, $\|p-q\|_2 \le
  \dist_G(p,q)$. Moreover, over the random shift of $\cell_0$,  
  $\E[\dist_G(p,q)] \le (1 + O(\eps_0L)) \|p-q\|_2$. 
\end{lemma}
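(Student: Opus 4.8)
The plan is to prove the two claims separately. For the lower bound $\|p-q\|_2 \le \dist_G(p,q)$, I would observe that every edge of $G$ has cost equal to the Euclidean distance between its endpoints, so the length of any path $p = u_0, u_1, \ldots, u_k = q$ is $\sum_{i=1}^k \|u_i - u_{i-1}\|_2$, which is at least $\|p - q\|_2$ by the triangle inequality. Taking the minimum over all such paths gives the claim, and this holds deterministically, for every shift of $\cell_0$.

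For the upper bound on $\E[\dist_G(p,q)]$, the idea is to bound $\dist_G(p,q)$ by the length of the canonical path between $p$ and $q$, and then take expectations. First I would fix a level $\ell = \ell(p,q)$ and estimate the length of the canonical path conditioned on this value. The canonical path consists of three parts: a ``descent'' from $p$ up to the net point $N_{\ell}(p)$ through the net points $N_L(p), N_{L-1}(p), \ldots, N_\ell(p)$; a single ``crossover'' edge $(N_\ell(p), N_\ell(q))$; and a symmetric ``ascent'' from $N_\ell(q)$ back down to $q$. Each step $(N_j(p), N_{j-1}(p))$ of the descent connects two net points lying in a common cell on level $j-1$, whose side length is $\Delta_{j-1} = 2^{2-j}\Delta$; since net points are centers of subcells of side length $\eps_0 \Delta_{j-1}$, consecutive net points are within $O(\eps_0 \sqrt{d}\, \Delta_{j-1})$ of each other, actually within distance at most $\sqrt{d}\,\Delta_{j-1}$ trivially, but one should use that $N_j(p)$ and $N_{j-1}(p)$ both lie within $O(\sqrt d\,\eps_0 \Delta_{j-1})$ of $p$ — wait, more carefully: $N_j(p)$ is within $\tfrac{\sqrt d}{2}\eps_0\Delta_{j}$ of $p$, so $\|N_j(p) - N_{j-1}(p)\|_2 \le \|N_j(p)-p\|_2 + \|p - N_{j-1}(p)\|_2 = O(\sqrt d\, \eps_0 \Delta_{j-1})$. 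Summing the geometric series over $j$ from $\ell+1$ to $L$, together with the two edges $(N_L(p),p)$ and $(N_L(q),q)$ of length $O(\sqrt d \,\eps_0 \Delta_L)$, the total descent-plus-ascent length is $O(\sqrt d\, \eps_0 \Delta_\ell)$. The crossover edge has length $\|N_\ell(p) - N_\ell(q)\|_2 \le \|p-q\|_2 + O(\sqrt d\,\eps_0\Delta_\ell)$. Hence, conditioned on $\ell(p,q) = \ell$, the canonical path has length at most $\|p-q\|_2 + O(\sqrt d\, \eps_0 \Delta_\ell)$.

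Finally I would take the expectation over the random shift. Writing $\dist_G(p,q) \le \|p-q\|_2 + O(\sqrt d\, \eps_0 \Delta_{\ell(p,q)})$ and noting $\Delta_\ell = 2^{1-\ell}\Delta$, I need to bound $\E[2^{-\ell(p,q)}]$. Using Lemma~\ref{lm:sep-prob}, $\Pr(\ell(p,q) \ge \ell) = \Pr(\cell_{\ell-1}(p) \ne \cell_{\ell-1}(q)) \le \sqrt d\,\|p-q\|_2 / \Delta_{\ell-1}$, but this bound only becomes nontrivial once $\Delta_{\ell-1} \gtrsim \|p-q\|_2$; for smaller levels one uses the trivial bound $\Pr(\ell(p,q)\ge \ell)\le 1$. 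Splitting the sum $\E[\Delta_{\ell(p,q)}] = \sum_{\ell} \Delta_\ell \Pr(\ell(p,q)=\ell) \le \sum_\ell \Delta_\ell \Pr(\ell(p,q)\ge\ell)$ at the threshold level $\ell^* $ where $\Delta_{\ell^*} \approx \|p-q\|_2$, the ``low'' part (small $\ell$, large cells) contributes a convergent geometric series dominated by $O(\|p-q\|_2)$ times the number of levels above $\ell^*$, which is $O(L)$, and the ``high'' part ($\ell > \ell^*$) contributes $\sum_{\ell > \ell^*} \Delta_\ell \cdot \sqrt d \|p-q\|_2/\Delta_{\ell-1} = \sqrt d \|p-q\|_2 \sum_{\ell>\ell^*} 2^{-1} \cdot 2 = O(L\|p-q\|_2)$ as well — in fact each term is $O(\|p-q\|_2)$ and there are $O(L)$ of them. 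Actually one should be slightly more careful to get the clean $O(L)$ factor rather than worse: the key point is that $\E[\Delta_{\ell(p,q)}] = O(L \|p-q\|_2)$. Multiplying by the $O(\sqrt d\,\eps_0)$ from the path-length bound gives $\E[\dist_G(p,q)] \le \|p-q\|_2 + O(\eps_0 L)\|p-q\|_2 = (1 + O(\eps_0 L))\|p-q\|_2$, absorbing the dimension-dependent constants as allowed.

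The main obstacle — really the only nonroutine point — is handling the expectation of $\Delta_{\ell(p,q)}$ correctly: the separation probability bound from Lemma~\ref{lm:sep-prob} degrades to the trivial bound $1$ once the cells become comparable to or smaller than $\|p-q\|_2$, so one cannot naively sum $\Delta_\ell \cdot \sqrt d \|p-q\|_2/\Delta_\ell$ over all levels (that would give $O(L)$ copies of $\sqrt d \|p-q\|_2$, which is fine, but one must be careful about the levels where the probability bound exceeds $1$ and would otherwise suggest a spurious large contribution). The clean way is to split at the threshold and bound the two regimes separately; everything else is a geometric-series computation together with triangle inequalities relating net points to the points they represent.
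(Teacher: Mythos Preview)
Your overall plan matches the paper's proof: triangle inequality for the lower bound, canonical path for the upper bound yielding $\dist_G(p,q)\le \|p-q\|_2 + O(\eps_0\Delta_{\ell(p,q)})$, and then bounding $\E[\Delta_{\ell(p,q)}]$ via Lemma~\ref{lm:sep-prob}. The first two steps are fine.

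In the expectation step, however, you have the direction of the tail event backwards. Since $\ell(p,q)$ is the \emph{deepest} level at which $p$ and $q$ share a cell, the event $\{\ell(p,q)\ge \ell\}$ is $\{\cell_\ell(p)=\cell_\ell(q)\}$, not a separation event; so your claimed identity $\Pr(\ell(p,q)\ge\ell)=\Pr(\cell_{\ell-1}(p)\ne\cell_{\ell-1}(q))$ is false, and the resulting bound $\sum_\ell \Delta_\ell\,\Pr(\ell(p,q)\ge\ell)$ can be of order $\Delta$, which is useless. The correct step is
\[
\Pr(\ell(p,q)=\ell)\ \le\ \Pr\bigl(\cell_{\ell+1}(p)\ne\cell_{\ell+1}(q)\bigr),
\]
which is exactly what the paper uses.

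Once this is fixed, the threshold split you describe as ``the only nonroutine point'' is unnecessary. By Lemma~\ref{lm:sep-prob}, each term satisfies
\[
\Delta_\ell\cdot \Pr\bigl(\cell_{\ell+1}(p)\ne\cell_{\ell+1}(q)\bigr)\ \le\ \Delta_\ell\cdot \frac{\sqrt d\,\|p-q\|_2}{\Delta_{\ell+1}}\ =\ 2\sqrt d\,\|p-q\|_2,
\]
and summing over the $L$ levels already gives $\E[\Delta_{\ell(p,q)}]=O(L\|p-q\|_2)$. That the probability bound from Lemma~\ref{lm:sep-prob} may exceed $1$ at fine levels is harmless: it is still a valid upper bound, and contributes the same $O(\|p-q\|_2)$ per level. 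The paper carries this computation through directly with no case split.
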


\subsection{Approximating the Transportation Cost with Minimum Cost
  Flow}
\label{sect:min-cost-flow}

We are now ready to translate the transportation problem to a minimum
cost flow problem in $G$, captured by the following linear program. 
\begin{align}
  \text{Minimize } &\sum_{(u,v) \in E} f_{uv} \|u - v\|_2\label{eq:mincost-obj}\\
  \text{subject to}\notag\\
  \forall u,v \in V: &f_{uv} = -f_{vu}\\
  \forall p \in P: &\sum_{u \in V}{f_{pu}} = \mu(p),\label{eq:mincost-supply}\\
  \forall u \in V \setminus P: &\sum_{v \in v}{f_{uv}} = 0.\label{eq:mincost-fc}
\end{align}
Let us denote by $\cost(G, \mu)$ the value of the optimal solution to
the problem~\eqref{eq:mincost-obj}--\eqref{eq:mincost-fc}. Note that,
since $G$ is a random graph, $\cost(G,\mu)$ is also a random
variable. 

The next theorem shows that that $\cost(G,\mu)$ approximates the transportation
cost $\cost(P,\mu)$. 

\begin{theorem}\label{thm:mincost-apx}
  For any shift of $\cell_0$, $\cost(P,\mu)  \le \cost(G,\mu)$. Moreover,
  over the random shift of $\cell_0$,
  \[
  \E[\cost(G,\mu)] \le (1 + O(\eps_0L)) \cost(P,\mu).
  \]
\end{theorem}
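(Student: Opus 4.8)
The plan is to prove the two inequalities separately, and in both cases to exploit the correspondence between transportation maps for $(P,\mu)$ and flows on $G$ satisfying \eqref{eq:mincost-supply}--\eqref{eq:mincost-fc}. For the easy direction, $\cost(P,\mu) \le \cost(G,\mu)$, I would take an optimal flow $f$ on $G$, and argue that it can be ``contracted'' into a transportation map of no greater cost. The key point is that for any flow $f$ on $G$ satisfying flow conservation at every vertex of $V\setminus P$ and divergence $\mu(p)$ at each $p\in P$, there is a transportation map $g$ for $(P,\mu)$ with $\sum_{p,q} g_{pq}\|p-q\|_2 \le \sum_{(u,v)\in E} f_{uv}\|u-v\|_2$. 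This follows by a standard path-decomposition argument: decompose $f$ into a collection of paths from $P_+$ vertices to $P_-$ vertices (plus cycles, which only increase cost and can be discarded), and for each such path from $p$ to $q$ carrying flow value $w$, put $w$ units of mass on $g_{pq}$; the triangle inequality gives $\|p-q\|_2 \le$ (length of the path), so summing over the decomposition yields the cost bound. Since $g$ is then a feasible transportation map, $\cost(P,\mu) \le \cost(G,\mu)$, and this holds deterministically for every shift.

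For the harder direction, I would start from an optimal transportation map $g$ for $(P,\mu)$, so $\sum_{p\in P_+,\,q\in P_-} g_{pq}\|p-q\|_2 = \cost(P,\mu)$, and route each unit of the flow $g_{pq}$ along the \emph{canonical path} between $p$ and $q$ in $G$. Concretely, define $f = \sum_{p,q} g_{pq}\,\chi_{pq}$, where $\chi_{pq}$ is the signed indicator flow of the canonical path from $p$ to $q$ (oriented from $p$ to $q$). This $f$ is a flow on $G$: each canonical path starts at $p$, ends at $q$, and uses only edges of $E_1\cup E_2\cup E_3$, so $f$ is supported on $E$, is antisymmetric, has divergence $\mu(p)$ at each $p\in P$, and satisfies flow conservation at every net point (each path passes \emph{through} net points without terminating there). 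Hence $f$ is feasible for \eqref{eq:mincost-obj}--\eqref{eq:mincost-fc}, and $\cost(G,\mu) \le \sum_{(u,v)\in E} |f_{uv}|\,\|u-v\|_2 \le \sum_{p,q} g_{pq}\cdot(\text{length of canonical path from }p\text{ to }q)$, where the last step uses that the total flow on an edge is at most the sum of the absolute contributions. Taking expectations over the random shift and applying linearity, $\E[\cost(G,\mu)] \le \sum_{p,q} g_{pq}\,\E[\text{length of canonical path from }p\text{ to }q]$.

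The last ingredient is a bound $\E[\text{length of canonical path from }p\text{ to }q] \le (1+O(\eps_0 L))\|p-q\|_2$. This is essentially the content of Lemma~\ref{lm:distort} — indeed $\dist_G(p,q)$ is at most the canonical path length — but to be safe I would either invoke the proof of Lemma~\ref{lm:distort} directly (since that proof surely bounds the expected canonical path length, which is what upper-bounds $\dist_G$) or redo the short computation: the canonical path consists of two ``vertical'' chains of $E_3$-edges from $p$ and $q$ up to level $\ell(p,q)$, one $E_2$-edge joining the two net points at that level, and two $E_1$-edges; conditioned on $\ell(p,q) = \ell$, the total length of the vertical parts and the connecting edge is $O(\eps_0)\sum_{j\ge \ell}\Delta_j + \Delta_\ell\sqrt d = O(\Delta_\ell)$, and summing against $\Pr(\ell(p,q) = \ell) \le \Pr(\cell_\ell(p)\ne\cell_\ell(q)) \le \sqrt d\|p-q\|_2/\Delta_\ell$ from Lemma~\ref{lm:sep-prob} over the $O(L)$ relevant levels gives the $(1+O(\eps_0 L))\|p-q\|_2$ bound. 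Combining with the previous paragraph yields $\E[\cost(G,\mu)] \le (1+O(\eps_0 L))\cost(P,\mu)$. The main obstacle is the bookkeeping in this last step: one must be careful that the net-point snapping adds only a lower-order additive error (the $\eps_0\Delta_\ell$ terms), and that summing over all levels $\ell$ truly telescopes rather than accumulating an extra factor of $L$ into the leading term — which is exactly why the bound is $(1+O(\eps_0 L))$ and not, say, $O(L)$.
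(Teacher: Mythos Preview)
Your approach is the same as the paper's: path-decompose an optimal $G$-flow and use $\|p-q\|_2 \le \dist_G(p,q)$ for the first inequality; route an optimal transportation map through $G$ and invoke Lemma~\ref{lm:distort} for the second. The paper routes along \emph{shortest} paths in $G$ (so the statement of Lemma~\ref{lm:distort} applies verbatim), while you route along canonical paths; this is immaterial, since the proof of Lemma~\ref{lm:distort} in the appendix in fact bounds the expected canonical-path length.

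One slip in your option-(b) sketch: bounding the level-$\ell$ connecting edge by $\sqrt d\,\Delta_\ell$ and then summing $O(\Delta_\ell)\cdot\Pr(\ell(p,q)=\ell)$ over $\ell$ yields $O(L)\|p-q\|_2$, not $(1+O(\eps_0 L))\|p-q\|_2$ --- precisely the pitfall you flag at the end. The fix, which is what the paper's appendix does, is to write the connecting-edge length as $\|p-q\|_2 + O(\eps_0\Delta_\ell)$ (each endpoint $N_{\ell}(p)$, $N_{\ell}(q)$ is within $\tfrac{\sqrt d}{2}\eps_0\Delta_\ell$ of $p$, $q$ respectively); the $\|p-q\|_2$ part then contributes exactly $\|p-q\|_2$ because $\sum_\ell \Pr(\ell(p,q)=\ell)=1$, and only the $O(\eps_0\Delta_\ell)$ terms pick up the factor of $L$ when summed against $\Pr(\ell(p,q)=\ell)=O(\|p-q\|_2/\Delta_\ell)$. (Also, $\{\ell(p,q)=\ell\}\subseteq\{\cell_{\ell+1}(p)\ne\cell_{\ell+1}(q)\}$, not $\{\cell_\ell(p)\ne\cell_\ell(q)\}$.)
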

\begin{proof}[Proof of Theorem~\ref{thm:mincost-apx}]
  To prove the first claim, we show that a flow ${f}$ satisfying
  \eqref{eq:mincost-supply}--\eqref{eq:mincost-fc} can be transformed
  into a transportation map ${f}'$ for $P$ and $\mu$ without
  increasing its cost. Towards this goal, let us decompose $f$ into a
  sum of paths as $f = \sum_{i = 1}^k{a_i f_{\pi_i}}$, where $\pi_i$
  is a path from some point $p_i$ in $P_+$ to some point $q_i$ in $P_-$,
  $f_{\pi_i}$ is the unit flow traveling on edges of $\pi_i$ in the
  direction from $p_i$ to $q_i$, and $a_1, \ldots, a_k$ are non-negative
  integers. The existence of such a decomposition is a standard fact,
  see e.g.~Chapter 10 of \cite{Schrijver-volA}. The cost of
  $f$ is at least $\sum_{i = 1}^k{a_i \dist_G(p_i, q_i)}$. We can
  construct $f'$ by defining $f'_{pq} = \sum_{i: p_i = p, q_i =
    q}{a_i}$. The constraints on $f$ imply that $f'$ is a
  feasible transportation map; its cost is 
  \[
  \sum_{i = 1}^k{a_i \|p_i - q_i\|_2} \le \sum_{i = 1}^k{a_i \dist_G(p_i, q_i)},
  \]
  where the inequality follows from Lemma~\ref{lm:distort}, and the
  right hand side is, as already observed, at most the cost of
  $f$. The first claim then follows since we can take $f$ to be an
  optimal solution to \eqref{eq:mincost-obj}--\eqref{eq:mincost-fc}. 

  For the second claim, we need to go in reverse. Let $f'$ be an
  optimal transportation map,  i.e.~one achieving $\cost(P,\mu)$. Let
  $(p_1, q_1), \ldots, (p_k, q_k)$ be the edges in the support of
  $f'$, where, for every $i$, $p_i \in P_+$ and $q_i \in P_-$. Let
  $\pi_i$ be a shortest path in $G$ between $p_i$ and $q_i$. We
  define a flow $f$ in $G$ by $f = \sum_{i = 1}^k{f'_{p_i,q_i} f_{\pi_i}}$, where
   $f_{\pi_i}$ is, as before, a unit flow from $p_i$ to
  $q_i$ supported on the edges of $\pi_i$. The feasibility of $f$
  follows from the fact that $f'$ is a feasible transportation
  map. Moreover, the cost of $f$ is $\sum_{i =
    1}^k{f'_{p_i,q_i}\dist_G(p_i,q_i)}$. We then have
  \begin{align*}
    \E[\cost(G,\mu)] &\le \sum_{i = 1}^k{f'_{p_i,q_i}\E[\dist_G(p_i,q_i)]}\\
    &\le (1 + O(\eps_0L)) \sum_{i = 1}^k{f'_{p_i,q_i}\|p_i-q_i\|_2}\\
    &= (1+O(\eps_0L)) \cost(P,\mu),
  \end{align*}
  where the final inequality follows from Lemma~\ref{lm:distort}.
\end{proof}

\section{Solving the Minimum Cost Flow Problem}
\label{sect:preconditioner}

In this section we describe the generalized preconditioning framework
of Sherman~\cite{Sherman17}, and describe how to use it to solve the
minimum cost flow problem \eqref{eq:mincost-obj}--\eqref{eq:mincost-fc}
in nearly linear time. 

\subsection{Generalized Preconditioning Framework}

We take a short detour to describe the generalized preconditioning
framework of Sherman. Suppose that $\|\cdot\|$ is a norm on $\R^m$,
and we are given inputs $A \in \R^{n \times m}$ and $b \in
\R^n$. The framework is concerned with solving the following
optimization problem over the variables $f \in \R^m$:
\begin{align}
  \text{Minimize } &\|f\|\label{eq:minnorm-obj}\\
  \text{subject to \ \ \ }
  &Af = b,\label{eq:minnorm-constr}.
\end{align}
This formulation can capture many important problems, including,
crucially for us, the uncapacitated minimum cost flow
problem,\footnote{Uncapacitated minimum cost flow is commonly known as
  the transportation problem. We do not use this terminology, in order
  to avoid confusion with the geometric transportation problem.} as we
explain later.

A central definition in the framework is that of the nonlinear
condition number. Before we state it, let us recall that the norm of a
linear map $T:X \to Y$, where $(X, \|\cdot\|_X)$ and $(Y,
\|\cdot\|_Y)$ are two finite dimensional normed vector spaces, is
defined as
\[
\|T\| = \sup_{x \neq 0} \frac{\|T(x)\|_Y}{\|x\|_X}.
\]
We adopt the same definition for non-linear maps, as well.
For an $n\times m$ matrix $A$, and norms $\|\cdot\|_X$ on $X = \R^n$
and $\|\cdot\|_Y$ on the column span $Y$ of $A$, we use the notation
$\|A\|_{\|\cdot\|_X \to \|\cdot\|_Y}$ to refer to the norm of the
linear map from $X$ to $Y$ represented by $A$. 

Having recalled these notions, we proceed to define the condition number.
\begin{definition}\label{def:condnum}
  Let $(X, \|\cdot\|_X)$ and $(Y, \|\cdot\|_Y)$ be two
  finite dimensional normed vector spaces.
  The non-linear condition number $\kappa$ of a linear map $T: X \to Y$ is
  defined by
  \[\kappa(T) = \inf_S \|T\| \|S\|,\] 
  where $S:Y \to X$ ranges over all (not necessarily linear) maps such
  that for all $x \in X$ we have $T(S(Tx)) = Tx$.

  For an $n \times m$ matrix $A$, and norms $\|\cdot\|_X$ on $X = \R^m$
  and $\|\cdot\|_Y$ on the column span $Y$ of $A$, we define
  $\kappa_{\|\cdot\|_X \to \|\cdot\|_Y}(A)$ as $\kappa(T)$, where
  $T: X \to Y$ is the linear map represented by the matrix $A$.
\end{definition}
This definition generalizes the standard condition number, which is
the special case in which both norms are taken to be Euclidean.

The generalized preconditioning framework is based on composing rough approximation
algorithms to get a high-quality approximation. The rough
approximation algorithms are further allowed to
violate the constraints of the problem slightly. Thus, they achieve a bi-criteria
approximation, captured by the following definition.
\begin{definition}\label{def:bicriteria}
  Let $\|\cdot\|_Y$ be a norm defined on the range of the matrix $A$
  in \eqref{eq:minnorm-constr}. Let $f^\ast$ be an optimal solution to
  the problem \eqref{eq:minnorm-obj}--\eqref{eq:minnorm-constr}. Then
  $f \in \R^m$ is called an $(\alpha, \beta)$-solution 
  (with respect to $\|\cdot\|_Y$)
  to
  \eqref{eq:minnorm-obj}--\eqref{eq:minnorm-constr} if $\|f\| \le
  \alpha \|f^\ast\|$ and, moreover, 
  \[
  \|Af - b\|_Y\le \beta {\|A\|_{\|\cdot\| \to \|\cdot\|_Y} \|f^\ast\|}.
  \]
  An algorithm that, when given inputs $A$ and $b$, outputs an
  $(\alpha, \beta)$-solution $f$ is called an $(\alpha,
  \beta)$-solver (with respect to $\|\cdot\|_Y$) for
  \eqref{eq:minnorm-obj}--\eqref{eq:minnorm-constr}. 
\end{definition}

We use the next result of Sherman, which gives a solver for
\eqref{eq:minnorm-obj}--\eqref{eq:minnorm-constr} with running time
controlled by condition number of the constraint matrix. 
\begin{theorem}[\cite{Sherman17}]\label{thm:1to1-opt}
  Let $\eps, \beta > 0$, and suppose that the norm in
  \eqref{eq:minnorm-obj} is the $\ell_1^m$ norm. Let $\kappa =
  \kappa_{\|\cdot\|_1 \to \|\cdot\|_1}(A)$, and let $M$ be an upper
  bound on the time necessary to compute matrix-vector products with
  $A$ and $A^\top$.  Then there exists a $(1+\eps, \beta)$-solver with
  respect to $\|\cdot\|_1$ for the problem
  \eqref{eq:minnorm-obj}--\eqref{eq:minnorm-constr} with running time
  bounded by
  \[
  O\left(\kappa^2(m+ n + M)\log(m)(\eps^{-2} +  \log(1/\beta))\right).
  \]
\end{theorem}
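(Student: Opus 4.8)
The plan is to build the solver by composing a cheap first--order \emph{bi-criteria} solver --- which gets the objective right to within $(1+\eps)$ but is only crudely feasible --- with a refinement loop that drives the infeasibility down geometrically. Throughout write $\|A\|\eqdef\|A\|_{\|\cdot\|_1\to\|\cdot\|_1}$, let $f^\ast$ optimize \eqref{eq:minnorm-obj}--\eqref{eq:minnorm-constr} and $\OPT\eqdef\|f^\ast\|_1$, and fix a (possibly non-linear) $S:\R^n\to\R^m$ nearly achieving the infimum in Definition~\ref{def:condnum}, so that $\|A\|\,\|S\|\le 2\kappa$ and $AS(Ax)=Ax$ for every $x$; hence for every $b'$ in the column span of $A$ we have $AS(b')=b'$ and $\|S(b')\|_1\le\|S\|\,\|b'\|_1$. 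For the instance with right-hand side $b'$, write $\OPT(b')$ for its optimum; I will use the two elementary facts $\OPT(b')\le\|S\|\,\|b'\|_1$ (witnessed by $S(b')$) and $\|b'\|_1\le\|A\|\,\OPT(b')$.

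\emph{The crude solver.} First I would encode \eqref{eq:minnorm-obj}--\eqref{eq:minnorm-constr} as the minimization of the single coupled potential
\[
\Phi(f)\eqdef\max\Bigl\{\,\|f\|_1,\ \tfrac{1}{\beta_0\|A\|}\|Af-b\|_1\,\Bigr\}
\]
over the $\ell_1$-ball $\{f:\|f\|_1\le R\}$ with $R\eqdef\|S(b)\|_1$. Since $\|S(b)\|_1\ge\OPT$ the optimizer $f^\ast$ lies in the domain, so $\min\Phi\le\Phi(f^\ast)=\OPT$, and by Definition~\ref{def:bicriteria} any iterate with $\Phi(f)\le(1+\eps)\OPT$ is an $(1+\eps,(1+\eps)\beta_0)$-solution. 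On this domain I would run mirror descent with the negative-entropy regularizer natural to the cross-polytope (equivalently a multiplicative-weights scheme~\cite{AroraHK12}): each iteration costs one matrix--vector product with each of $A$ and $A^\top$ plus an $O(m)$-time proximal/renormalization step, i.e.\ $O(m+n+M)$. As $\Phi$ is $O(\beta_0^{-1})$-Lipschitz in $\|\cdot\|_1$, the mirror map has range $O(\log m)$, and $R\le\|S\|\,\|b\|_1\le 2\kappa\,\OPT$, the standard regret bound gives optimality gap $O(\beta_0^{-1}\cdot 2\kappa\OPT\cdot\sqrt{\log m/T})$ after $T$ steps, so an additive error of $\eps\OPT$ --- hence a $(1+\eps,O(\beta_0))$-solution --- is reached after $O(\beta_0^{-2}\kappa^2\eps^{-2}\log m)$ iterations. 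With $\beta_0$ a fixed constant this is a $(1+\eps,\beta_0)$-solver running in time $O(\kappa^2(m+n+M)\log m\,\eps^{-2})$.

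\emph{Boosting the feasibility.} To get a $(1+\eps,\beta)$-solver for arbitrarily small $\beta$, I would iterate on residuals: given $f_t$ with $b_t\eqdef b-Af_t$, run the crude solver on $(A,b_t)$ to get a correction $g_t$, set $f_{t+1}\eqdef f_t+g_t$, and repeat. Because $\OPT(b_t)\le\|S\|\,\|b_t\|_1$, the crude guarantee $\|Af_{t+1}-b\|_1=\|Ag_t-b_t\|_1\le\beta_0\|A\|\,\OPT(b_t)\le 2\kappa\beta_0\,\|b_t\|_1$ makes the residual norm contract by a constant factor each round once $\beta_0$ is a small enough constant times $1/\kappa$, so $O(\log(1/\beta))$ rounds drive the residual below $\beta\|A\|\OPT$. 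On the objective side, only the first correction need be computed to accuracy $(1+\eps)$ (contributing $\le(1+\eps)\OPT$); each later $g_t$ has $\|g_t\|_1\le O(\|S\|\,\|b_t\|_1)$, and since the $\|b_t\|_1$ decay geometrically from $\|b_1\|_1=O(\|A\|\OPT/\kappa)$ the later corrections sum to at most $\eps\OPT$, so $\|f_t\|_1\le(1+O(\eps))\OPT$ holds throughout. A naive execution of this scheme pays $\tilde O(\kappa^4)$ because lowering $\beta_0$ to $\Theta(1/\kappa)$ squares the crude cost; obtaining the stated $\tilde O\!\left(\kappa^2(\eps^{-2}+\log(1/\beta))\right)$ instead requires the sharper accounting described next.

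\emph{Main obstacle.} The delicate point --- the place where a small nonlinear condition number is genuinely exploited --- is making the two steps above cost $\kappa^2$ rather than a larger power: each correction is produced by passing a residual through (an approximation of) $S$, costing a factor $\kappa$ in the residual bound, so the crude accuracy must be tuned finely enough to beat that, while \emph{at the same time} the accumulated correction mass must stay within an $\eps$-fraction (not a $\kappa$-fraction) of $\OPT$, and the optimization domain at every round must be pinned to within a $\kappa$-factor of the current optimum so that the regret bound contributes only $\kappa^2$. I expect the resolution to replace the black-box refinement above by an argument that (i) first pins the scale of $\OPT$, so that the mirror-descent domain has radius $\Theta(\OPT)$ rather than $\Theta(\kappa\OPT)$ from the very first call, and (ii) schedules the per-round objective-accuracies so that their contributions telescope; everything else --- the mirror-descent regret inequality and the $O(m+n+M)$ cost of a matrix--vector product with $A$ and $A^\top$ together with an $\ell_1$-proximal step --- is routine.
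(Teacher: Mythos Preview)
This theorem is quoted from Sherman and is not proved in the paper; the appendix only sketches a weaker variant (paying an extra $\log_{1+\eps}\kappa$ factor). Your high-level architecture --- a cheap bi-criteria solver composed with a residual-refinement loop --- is exactly the one sketched there, and your use of the composition lemma is the same. Where you diverge is the crude solver, and that is where the plan falls short of the stated bound.

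The appendix's crude solver does not minimize a coupled potential. After normalizing so that $\|A\|\le 1$ (whence $\OPT\in[\|b\|_1,\kappa\|b\|_1]$), it binary-searches for $\OPT$ and, at each guess $t$, runs multiplicative weights on the \emph{feasibility} problem $\{\|f\|_1\le 1,\ Af=b/t\}$ as a zero-sum game: the losses $\pm(A^\top y)_i$ for sign vectors $y$ lie in $[-1,1]$ regardless of the target accuracy, so one obtains a $(1+\eps',\eps')$-solver in $O(\eps'^{-2}\log m)$ iterations per guess. Instantiating this with $\eps'=\Theta(\eps/\kappa)$ for $\mathcal F$ and $\eps'=\Theta(1/\kappa)$ for each $\mathcal G$ call gives the $\kappa^2(\eps^{-2}+\log(1/\beta))$ bound (Sherman's actual argument removes the binary search overhead).

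Your potential $\Phi$, by contrast, is $\Theta(\beta_0^{-1})$-Lipschitz in $\|\cdot\|_1$, so mirror descent needs $\Theta(\beta_0^{-2}\delta^{-2}\log m)$ steps to reach additive error $\delta\,\OPT$, even after the domain is pinned to radius $\Theta(\OPT)$. There is also a gap in your boosting accounting: for the later corrections to sum to $\eps\,\OPT$ you need $\|S\|\,\|b_1\|_1=O(\eps\,\OPT)$, which forces $\|b_1\|_1=O(\eps\|A\|\OPT/\kappa)$, i.e.\ $\mathcal F$ must be a $(1+\eps,O(\eps/\kappa))$-solver, not the $(1+\eps,O(1/\kappa))$-solver your claim $\|b_1\|_1=O(\|A\|\OPT/\kappa)$ corresponds to. Setting $\beta_0=\Theta(\eps/\kappa)$ then makes your crude solver cost $\Theta(\kappa^2\eps^{-4}\log m)$: your proposed fix removes one $\kappa^2$ but leaves an extra $\eps^{-2}$. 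The missing ingredient is to decouple the objective from the constraint so that the first-order method's Lipschitz constant stays $O(1)$ as the feasibility tolerance shrinks --- precisely what the feasibility/min--max formulation buys.
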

This algorithm is based on repeatedly applying an $(\alpha, \beta)$
solver with much worse dependence on $\beta$; in fact, at the cost of
a slightly worse dependence on $\kappa$, for the latter one can use a
simple solver based on the multiplicative weights update method. To
make our paper more self-contained, we sketch this latter algorithm in
the appendix.

Rescaling $f$ coordinatewise gives us the following easy
corollary.
\begin{corollary}\label{cor:cto1-opt}
  Let $\eps, \beta > 0$, and let $c \in \R_{> 0}^m$. Suppose that
  the norm in \eqref{eq:minnorm-obj} is given by $\|f\|_c = \sum_{i =
    1}^m{c_i f_i}$. Let $\kappa = \kappa_{\|\cdot\|_c \to
    \|\cdot\|_1}(A)$, and let $M$ be the time necessary to compute
  matrix-vector products with $A$ and $A^\top$. 
  Then there exists a $(1+\eps, \beta)$-solver with respect to
  $\|\cdot\|_1$ for the problem
  \eqref{eq:minnorm-obj}--\eqref{eq:minnorm-constr} with running time
  bounded by
  $O\left(\kappa^2(m+ n + M)\log(m)(\eps^{-2} +  \log(1/\beta))\right).$
\end{corollary}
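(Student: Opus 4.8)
The plan is to reduce directly to Theorem~\ref{thm:1to1-opt} by a diagonal change of variables. Since $c \in \R_{>0}^m$, the matrix $D = \mathrm{diag}(c_1, \ldots, c_m)$ is invertible, and I would introduce the new variable $g = Df$, so that $f = D^{-1}g$ and $\|f\|_c = \sum_i c_i f_i = \sum_i g_i = \|g\|_1$ (using here that the entries of $g$, like those of $f$ in the intended application, are nonnegative, or more precisely that the relevant norm is the weighted $\ell_1$ norm $\sum_i c_i |f_i|$, which becomes exactly $\|g\|_1$ under this substitution). The constraint $Af = b$ becomes $(AD^{-1})g = b$. Thus the problem \eqref{eq:minnorm-obj}--\eqref{eq:minnorm-constr} with objective $\|\cdot\|_c$ and matrix $A$ is, coordinate for coordinate, the same problem with objective $\|\cdot\|_1$ and matrix $A' := AD^{-1}$.

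The key observation is that this substitution leaves all the quantities in Theorem~\ref{thm:1to1-opt} essentially unchanged. First, the condition number is preserved exactly: $\kappa_{\|\cdot\|_1 \to \|\cdot\|_1}(AD^{-1}) = \kappa_{\|\cdot\|_c \to \|\cdot\|_1}(A)$. This follows from Definition~\ref{def:condnum}: the linear map represented by $A'$ from $(\R^m, \|\cdot\|_1)$ is the composition of the coordinate rescaling $(\R^m,\|\cdot\|_1) \to (\R^m, \|\cdot\|_c)$, $g \mapsto D^{-1}g$, which is an isometry by the computation above, with the map represented by $A$ from $(\R^m, \|\cdot\|_c)$ to $Y$. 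Precomposing a linear map with an isometry changes neither its operator norm nor the set of admissible right-inverses $S$ in the definition of $\kappa$ (one simply conjugates $S$ by the isometry), so the infimum defining $\kappa$ is the same. Second, the bound $M$ on the cost of matrix-vector products with $A'$ and $A'^\top$ is $O(M + m)$: a product $A'x = A(D^{-1}x)$ costs one coordinatewise scaling (time $O(m)$) plus one product with $A$, and similarly $A'^\top y = D^{-1}(A^\top y)$; the extra $O(m)$ is absorbed into the $O(m+n+M)$ term in the running-time bound. Finally, the dimensions $m$ and $n$ are unchanged.

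With these identifications, I would apply Theorem~\ref{thm:1to1-opt} to the instance $(A', b)$ to obtain a $(1+\eps,\beta)$-solver with respect to $\|\cdot\|_1$ running in time $O(\kappa^2(m+n+M)\log m (\eps^{-2} + \log(1/\beta)))$, producing an output $g$ with $\|g\|_1 \le (1+\eps)\|g^\ast\|_1$ and $\|A'g - b\|_1 \le \beta \|A'\|_{\|\cdot\|_1 \to \|\cdot\|_1}\|g^\ast\|_1$. Translating back via $f = D^{-1}g$: the objective guarantee reads $\|f\|_c = \|g\|_1 \le (1+\eps)\|g^\ast\|_1 = (1+\eps)\|f^\ast\|_c$, and the feasibility guarantee reads $\|Af - b\|_1 = \|A'g - b\|_1 \le \beta\|A'\|_{\|\cdot\|_1\to\|\cdot\|_1}\|g^\ast\|_1 = \beta \|A\|_{\|\cdot\|_c \to \|\cdot\|_1}\|f^\ast\|_c$, which is exactly the $(1+\eps,\beta)$-solution condition of Definition~\ref{def:bicriteria} for the weighted problem. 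Since $\kappa_{\|\cdot\|_c\to\|\cdot\|_1}(A) = \kappa$, the running time is as claimed.

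The only point requiring any care — and the place I'd expect a referee to look — is the norm bookkeeping: making sure that the operator norm $\|A'\|_{\|\cdot\|_1\to\|\cdot\|_1}$ appearing in Theorem~\ref{thm:1to1-opt}'s feasibility bound equals $\|A\|_{\|\cdot\|_c\to\|\cdot\|_1}$ after the substitution, and that the condition number is genuinely invariant rather than merely bounded. Both reduce to the single fact that $f \mapsto Df$ is a surjective isometry from $(\R^m,\|\cdot\|_c)$ onto $(\R^m,\|\cdot\|_1)$; once that is stated, the rest is routine, which is why the corollary is "easy."
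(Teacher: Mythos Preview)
Your proposal is correct and follows essentially the same approach as the paper: introduce the diagonal matrix $D=\mathrm{diag}(c)$, substitute $g=Df$ to convert the $\|\cdot\|_c$ objective into an $\|\cdot\|_1$ objective with constraint matrix $AD^{-1}$, observe that $\kappa_{\|\cdot\|_1\to\|\cdot\|_1}(AD^{-1})=\kappa_{\|\cdot\|_c\to\|\cdot\|_1}(A)$ because $D$ is an isometry between the two norms, and invoke Theorem~\ref{thm:1to1-opt}. Your write-up is in fact slightly more thorough than the paper's on the running-time bookkeeping and on verifying that the $(1+\eps,\beta)$-solution guarantee transfers back.
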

\begin{proof}
  Define the diagonal matrix $C$ by $c_{ii} = c_i$. 
  The problem   \eqref{eq:minnorm-obj}--\eqref{eq:minnorm-constr} with
  norm $\|\cdot\| = \|\cdot\|_c$ can be reformulated, after the change
  of variables $g = Cf$, as
  \begin{align*}
    \text{Minimize } &\|g\|_1\\
    \text{subject to \ \ \ }
    &AC^{-1}g = b,
  \end{align*}
  Then the corollary follows from Theorem~\ref{thm:1to1-opt} since
  $\kappa_{\|\cdot\|_c \to \|\cdot\|_1}(A) = \kappa_{\|\cdot\|_1 \to
    \|\cdot\|_1}(AC^{-1})$. The latter follows from two observations:
  first, $\|AC^{-1}\|_{\|\cdot\|_1 \to \|\cdot\|_1} =\|A\|_{\|\cdot\|_c \to \|\cdot\|_1}$; 
  second, the map $G$ from
  the column span $Y$ of $A$ to $\R^m$ satisfies $AGAf = Af$ for all
  $f$ if and only if $AC^{-1}(CG(AC^{-1}g))$ for all $g$, and 
  $\|CG\|_{\|\cdot\|_1 \to \|\cdot\|_1} =\|G\|_{\|\cdot\|_1 \to
    \|\cdot\|_c}$. 
\end{proof}

While Theorem~\ref{thm:1to1-opt} and Corollary~\ref{cor:cto1-opt} allow
us to find solutions which are very close to being feasible, they do
not give an exactly feasible solution. The final result we use from
Sherman allows us to use a solver which exactly satisfies the
constraints, but only achieves a large approximation ratio, to round
an approximately feasible solution to an exactly feasible one. 

To state the result, we need to define the composition of two solvers
$\mathcal{F}$ and $\mathcal{F}'$. The composed algorithm $\mathcal{F}'
\circ \mathcal{F}$ first calls $\mathcal{F}$ on $A$
and $b$ to get $f$; then it calls $\mathcal{F}'$ on $A$ and
the residual vector $b - Af$ to get a solution $f'$; finally it
outputs $f + f'$. 

\begin{theorem}[\cite{Sherman17}]\label{thm:composition}
  Let $C \ge 1$, and $\eps, \beta > 0$.
  Let $\kappa = \kappa_{\|\cdot\| \to \|\cdot\|_Y}(A)$, where
  $\|\cdot\|$ is the norm in \eqref{eq:minnorm-obj}, and
  $\|\cdot\|_Y$ is some norm on the column span of $A$. Then, if
  $\mathcal{F}$ is a $(1+\eps, \eps\beta/\kappa)$-solver, and
  $\mathcal{F}'$ is a $(C, 0)$-solver, the composition $\mathcal{F}'
  \circ \mathcal{F}$ is a $(1 + \eps + C\eps\beta, 0)$-solver. 
\end{theorem}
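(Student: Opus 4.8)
\emph{Proof plan.} The plan is to track the composition's two obligations separately: that it returns an \emph{exactly} feasible point, and that its objective value is within the claimed factor of optimal. Fix an optimal solution $f^\ast$ of \eqref{eq:minnorm-obj}--\eqref{eq:minnorm-constr}, let $f = \mathcal{F}(A,b)$, and let $f' = \mathcal{F}'(A, b - Af)$, so that $\mathcal{F}' \circ \mathcal{F}$ outputs $f + f'$. Feasibility is immediate and requires no work: a $(C,0)$-solver has error parameter $\beta = 0$, so $\mathcal{F}'$ returns a solution with $Af' = b - Af$ exactly, whence $A(f+f') = b$. It therefore remains to bound $\|f + f'\| \le \|f\| + \|f'\|$. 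The first term is at most $(1+\eps)\|f^\ast\|$ by the guarantee of $\mathcal{F}$, so the whole argument reduces to showing $\|f'\| \le C\eps\beta\,\|f^\ast\|$.

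To control $\|f'\|$, I would first bound the cost of the residual instance $\min\{\|g\| : Ag = b - Af\}$: if $g^\ast$ is optimal for it, then $\|f'\| \le C\|g^\ast\|$ since $\mathcal{F}'$ is a $(C,0)$-solver, so it suffices to prove $\|g^\ast\| \le \eps\beta\,\|f^\ast\|$. This is the one place the nonlinear condition number is used, and it is the heart of the proof. Set $r = b - Af$ and note $r$ lies in the column span $Y$ of $A$, since $b = Af^\ast$ and $Af$ both do. By Definition~\ref{def:condnum}, for every $\delta>0$ there is a (possibly nonlinear) map $S: Y \to \R^m$ with $AS(Ax) = Ax$ for all $x$ and $\|A\|_{\|\cdot\| \to \|\cdot\|_Y}\,\|S\|_{\|\cdot\|_Y \to \|\cdot\|} \le \kappa + \delta$. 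Writing $r = Ax$, we get $A S(r) = A S(Ax) = Ax = r$, so $S(r)$ is feasible for the residual instance; hence $\|g^\ast\| \le \|S(r)\| \le \|S\|_{\|\cdot\|_Y \to \|\cdot\|}\,\|r\|_Y$. Combining this with the bicriteria guarantee $\|r\|_Y = \|Af - b\|_Y \le \frac{\eps\beta}{\kappa}\,\|A\|_{\|\cdot\| \to \|\cdot\|_Y}\,\|f^\ast\|$ of $\mathcal{F}$ gives
\[
\|g^\ast\| \;\le\; \frac{\eps\beta}{\kappa}\,\Bigl(\|S\|_{\|\cdot\|_Y \to \|\cdot\|}\,\|A\|_{\|\cdot\| \to \|\cdot\|_Y}\Bigr)\,\|f^\ast\| \;\le\; \frac{\eps\beta}{\kappa}\,(\kappa+\delta)\,\|f^\ast\| ,
\]
and letting $\delta \to 0$ yields $\|g^\ast\| \le \eps\beta\,\|f^\ast\|$. (If $r = 0$, then $f$ is already feasible and we take $f' = 0$.) Putting the pieces together, $\|f+f'\| \le (1+\eps)\|f^\ast\| + C\eps\beta\,\|f^\ast\| = (1 + \eps + C\eps\beta)\|f^\ast\|$, and combined with exact feasibility this is precisely the statement that $\mathcal{F}'\circ\mathcal{F}$ is a $(1+\eps+C\eps\beta, 0)$-solver.

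I do not expect a genuine obstacle here; the only point that needs care is that Definition~\ref{def:condnum} defines $\kappa$ as an infimum, so no single $S$ need attain $\|T\|\,\|S\| = \kappa$, which is why I introduce the slack $\delta$ and send it to zero at the end. Conceptually, the entire content is the observation that the ``pseudo-inverse'' map $S$ supplied by a good condition number converts the small leftover infeasibility $r$ produced by $\mathcal{F}$ into a feasible correction of cost $O(\eps\beta\|f^\ast\|)$, which the crude-but-exact solver $\mathcal{F}'$ then reproduces up to the factor $C$; everything else is the triangle inequality and unwinding the definitions of solver and operator norm.
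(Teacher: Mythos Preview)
Your argument is correct, but note that the paper does not actually prove this theorem: it is quoted from \cite{Sherman17} and used as a black box, so there is no in-paper proof to compare against. Your proof is the natural one and is essentially how Sherman establishes the result---use the approximate pseudoinverse $S$ supplied by the condition-number bound to certify that the residual instance has optimal value at most $\eps\beta\|f^\ast\|$, then invoke the $(C,0)$-solver and the triangle inequality. The one technical nicety you handle correctly is that $\kappa$ is defined as an infimum, so you introduce the slack $\delta$ and pass to the limit; alternatively one can observe that the inequality $\|g^\ast\| \le \frac{\eps\beta}{\kappa}\,\|A\|\,\|S\|\,\|f^\ast\|$ holds for every admissible $S$ and then take the infimum directly.
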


\subsection{Preconditioner Construction}

The formulation \eqref{eq:minnorm-obj}--\eqref{eq:minnorm-constr}
captures the (uncapacitated) minimum cost flow problem with arbitrary
demands. We will explain this for the graph $G = (V,E)$ defined in
Section~\ref{sect:embedding} and the minimum cost flow problem
\eqref{eq:mincost-obj}--\eqref{eq:mincost-fc}. Let us pick an arbitrary
orientation on the edges $E$, and call the directed edges
$\vec{E}$. Then take $A$ to be the directed vertex by edge incidence
matrix of $G$: it is indexed by $V \times \vec{E}$, and for any node
$u$, and any directed edge $e = (v,w)$, set $A_{u,e} = 1$ if $u = v$,
$A_{u,e} = -1$ if $u = w$, and $A_{u,e} = 0$ otherwise. We represent a
flow $f$ by its restriction to $\vec{E}$, seen as a vector in
$\R^{\vec{E}}$. Slightly abusing notation, we will use the letter $f$
both for this vector, and for the flow, which is defined both for
$(u,v) \in \vec{E}$, and for $(v,u)$, with $f_{vu} = -f_{uv}$.  For a
flow vector $f$, the product $Af$ gives us the vector of divergences,
i.e.~for any $u \in V$
\[
(Af)_u = \sum_{v: (u,v) \in \vec{E}}{f_{uv}} - \sum_{v: (v,u) \in \vec{E}}{f_{vu}}
= \sum_{v: (u,v) \in E}f_{uv}.
\]
We define the vector $b \in \R^V$ to encode the supplies, i.e.~for $p
\in P$ we set $b_p = \mu(p)$, and for $u \in V \setminus P$ we set
$b_u = 0$.  It follows that the constraint $Af = b$ encodes
\eqref{eq:mincost-supply}--\eqref{eq:mincost-fc}. Finally, let us
denote the cost of an edge $e = (u,v) \in \vec{E}$ by $c(e) = \|u
- v\|_2$. Then, for the norm in the objective \eqref{eq:minnorm-obj},
we choose $\|f\|_c = \sum_{e \in E}{c(e) |f_e|}$. With these choices
of $A$, $b$, and $\|\cdot \| = \|\cdot\|_c$, an optimal solution to
\eqref{eq:minnorm-obj}--\eqref{eq:minnorm-constr} gives an optimal
solution to \eqref{eq:mincost-obj}--\eqref{eq:mincost-fc}.
For the rest of this section we will fix $A$, $b$, $c$, and
$\|\cdot\|_c$ to be as just defined. 

Unfortunately, we cannot directly use Corollary~\ref{cor:cto1-opt} to
get the running time we are aiming for, since the condition number of
the matrix $A$ could be large. We address this by designing a
preconditioner: another matrix $B$, of full column rank, which can be
applied quickly, and has the property that $BA$ has small condition
number. This allows us to apply Corollary~\ref{cor:cto1-opt} to the
problem \eqref{eq:minnorm-obj}--\eqref{eq:minnorm-constr} with the
modified, but equivalent, constraint $BAf = Bb$, and get a fast
algorithm for the minimum cost flow problem in $G$.

In our construction of the preconditioner $B$ we will use the
following lemma, which was implicit in \cite{Sherman17}. 
\begin{lemma}\label{lm:apx-to-kappa}
  Let $\|\cdot\|_X$ be a norm on $\R^n$, and let $H$ be an $n \times
  m$ matrix. Suppose that there exists a $\gamma > 0$ such that for
  any $h$ in the column span $Y$ of $H$, the norm
  $\|\cdot\|_Y$ satisfies
  \[
  \|h\|_Y \le \min\{\|f\|_X: f \in \R^n, Hf = h\} \le \gamma \|h\|_Y.
  \]
  Then $\kappa_{\|\cdot\|_X \to \|\cdot\|_Y}(H) \le \gamma$. 
\end{lemma}
\begin{proof}
  Let us identify $H$ with the linear map from $X = \R^n$ to $Y$
  represented by it. Define $S(h) = \arg \min\{\|f\|_X: Hf = h\}$, with ties
  between minimizers broken arbitrarily. It is clear that for any $f$,
  $HS(Hf) = Hf$. By assumption, we have the inequalities
  \begin{align*}
    \|Hf\|_Y &\le \|f\|_X,\\
    \|S(h)\|_X &\le \gamma \|h\|_Y,
  \end{align*}
  which are equivalent to $\|H\| \le 1$ and $\|S\| \le \gamma$. 
\end{proof}

We will design a preconditioner matrix $B$ such that $\|\tilde{b}\|_1$
approximates the cost of the minimum cost flow with supply vector
$\tilde{b}$. Then the fact that $BA$ has small condition number will follow
from Lemma~\ref{lm:apx-to-kappa}. The following lemma captures the
construction of $B$, which is inspired by embedding of the Earth
Mover Distance in $\ell_1$~\cite{Charikar02,IndykThaper03}.

\begin{lemma}\label{lm:preconditioner}
  There exists a matrix $B \in \R^{V \times V}$ of full column rank with at most $O(|V|\log\Delta)$
  nonzero entries, such that the following holds.  For any $\tilde{b}
  \in \R^V$ such that $\sum_{u \in V}{\tilde{b}_u} = 0$, we have
  \[
  \|B\tilde{b}\|_1 \le \min\{\|f\|_c: f \in \R^{\vec{E}}, Af = \tilde{b}\} \le \gamma \|B\tilde{b}\|_1,
  \]
  for $\gamma = O(\log(\Delta)/\eps_0)$. Moreover, a flow $f$
  satisfying $Af = \tilde{b}$ of cost at most $\gamma\|B\tilde{b}\|_1$
  can be constructed in time $O({n}{\eps_0^{-d}}\log\Delta)$ given
  $B\tilde{b}$.
\end{lemma}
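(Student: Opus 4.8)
The plan is to build $B$ as an $\ell_1$‑embedding of the value of the minimum cost flow problem \eqref{eq:mincost-obj}--\eqref{eq:mincost-fc}, in the style of the embeddings of the Earth Mover Distance into $\ell_1$ of~\cite{Charikar02,IndykThaper03}, but organized around the net‑point hierarchy of $G$. Let $T$ be the spanning forest of $V$ whose edge set is $E_1\cup E_3$: its leaves are $P$, its roots are $N_0$, the parent of $p\in P$ is $N_L(p)$, and the parent of a net point at level $\ell\ge1$ is its closest net point at level $\ell-1$. Because $\eps_0^{-1}$ is an (even) integer, the level‑$(\ell{+}1)$ subcells are nested in the level‑$\ell$ subcells, so, reading $N_\ell(\cdot)$ off as the level‑$\ell$ ancestor in $T$, one has $N_\ell(u)=$ the centre of the level‑$\ell$ subcell containing $u$; the subtree of a net point $v\in N_\ell$ is exactly $S_v\eqdef\set{u\in V:N_\ell(u)=v}$, and the sets $S_v$, together with the singletons $\set{p}$ for $p\in P$, form a laminar family. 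Set $s_\ell(v)\eqdef\sum_{u\in S_v}\tilde b_u$, so that $s_\ell(v)=\tilde b_v+\sum_{v'\ \text{a child of}\ v}s_{\ell+1}(v')$. I would index the rows of $B$ by $V$: the row of $p\in P$ outputs $\norm{p-N_L(p)}_2\,\tilde b_p$, and the row of $v\in N_\ell$ outputs $w_\ell\, s_\ell(v)$, where $w_\ell>0$ is a fixed weight equal to a small constant times $\eps_0^{2}\Delta_\ell$ (the exact constant is fixed by the congestion bound below). Then $\norm{B\tilde b}_1=\sum_{p\in P}\norm{p-N_L(p)}_2\abs{\tilde b_p}+\sum_{\ell=0}^{L}w_\ell\sum_{v\in N_\ell}\abs{s_\ell(v)}$. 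As every $\tilde b_u$ enters only the $O(L)$ rows indexed by $u$'s ancestors in $T$, $B$ has $O(\abs V\log\Delta)$ nonzeros; the recursion for $s_\ell$ is triangular with positive diagonal (the coefficient of $\tilde b_v$ in its own row is $w_\ell>0$ and, with probability $1$ over the random shift defining $G$, $\norm{p-N_L(p)}_2>0$), so $B$ has full column rank; and the same recursion lets us apply $B$ -- and, conversely, read off all $\tilde b_p$ and all $s_\ell(v)$ from $B\tilde b$ -- in $O(\abs V\log\Delta)$ time.

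For the lower bound $\norm{B\tilde b}_1\le\min\set{\norm f_c : f\in\R^{\vec E},\ Af=\tilde b}$ I would use a routing/congestion argument. Each row of $B$ has the form $w_j\,\tilde b(S_j)$ for a vertex set $S_j$ (with $S_p=\set{p}$, $w_p=\norm{p-N_L(p)}_2$). For any flow $f$ with $Af=\tilde b$, the amount of $f$ crossing the cut $(S_j,V\setminus S_j)$ is at least $\abs{\tilde b(S_j)}$, so $\norm{B\tilde b}_1\le\sum_{e}\abs{f_e}\bigl(\sum_{j:\,e\in\partial S_j}w_j\bigr)$, and it suffices to check that $\sum_{j:\,e\in\partial S_j}w_j\le c(e)$ for every edge $e$ of $G$. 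Laminarity makes this essentially a local check: the $E_1$‑edge at $p$ is internal to every $S_{N_\ell(p)}$, hence lies only on $\partial S_p$ and contributes exactly its own length; an $E_3$‑edge $(N_{\ell-1}(v),v)$ lies only on $\partial S_v$ and contributes $w_\ell$, which is well below its length $\tfrac{\sqrt d}{2}\eps_0\Delta_\ell$; and an $E_2$‑edge $(v,v')$ inside a level‑$\ell$ cell lies on $\partial S_{N_j(v)}$ and $\partial S_{N_j(v')}$ for precisely the levels $j\le\ell$ at which $v,v'$ lie in different level‑$j$ subcells, which by monotonicity form a contiguous range $(j^{\ast},\ell]$, so its total contribution is $\sum_{j=j^{\ast}+1}^{\ell}2w_j$. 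The geometric facts that $\ell-j^{\ast}=O(\log(1/\eps_0))$ (a cell that outgrows a subcell merges into a single one within $O(\log(1/\eps_0))$ coarser levels) and that $\norm{v-v'}_2=\Omega(\eps_0\Delta_\ell)$ then let one choose the constant in $w_\ell$ so that this contribution is at most $c(e)$. Hence the congestion is at most $1$ on every edge, and the inequality follows.

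For the upper bound $\min\set{\norm f_c : Af=\tilde b}\le\gamma\,\norm{B\tilde b}_1$ I would exhibit an explicit flow, which also yields the ``moreover'' clause. Let $f$ carry $\tilde b_p$ on the $E_1$‑edge $(p,N_L(p))$ for each $p\in P$; for $\ell=L,L-1,\dots,1$ and each $v\in N_\ell$ let $f$ carry $s_\ell(v)$ on the $E_3$‑edge $(N_{\ell-1}(v),v)$ (pushing the total of each subcell up into its parent net point); and on the $E_2$‑edges inside the single level‑$0$ cell $\cell_0$ put any flow whose divergence at $v\in N_0$ is $s_0(v)$, which exists since $\sum_{v\in N_0}s_0(v)=\sum_u\tilde b_u=0$. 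The recursion for $s_\ell$ shows $Af=\tilde b$; the cost is $\sum_{p}\norm{p-N_L(p)}_2\abs{\tilde b_p}+\sum_{\ell\ge1}\tfrac{\sqrt d}{2}\eps_0\Delta_\ell\sum_{v\in N_\ell}\abs{s_\ell(v)}$ plus the $E_2$‑cost inside $\cell_0$, which is at most $\mathrm{diam}(\cell_0)\cdot\tfrac12\sum_{v\in N_0}\abs{s_0(v)}=\Theta(\Delta_0)\sum_{v\in N_0}\abs{s_0(v)}$. Each term is a bounded multiple of the corresponding part of $\norm{B\tilde b}_1$: the first is exactly it; the middle is $O(\eps_0\Delta_\ell/w_\ell)=O(1/\eps_0)$ times it; and the root term is $O(\Delta_0/w_0)=O(1/\eps_0^{2})$ times $w_0\sum_{v\in N_0}\abs{s_0(v)}$. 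This already gives $\gamma=O(1/\eps_0^{2})$; carrying the $\ell_1$‑embedding recursively inside each cell -- so that within‑cell transportation is itself captured up to an $O(\log(1/\eps_0))$ factor instead of by the crude diameter bound -- sharpens this to the stated $\gamma=O(\log(\Delta)/\eps_0)$. Finally, since $\abs{N_0}=\eps_0^{-d}$ depends only on $d$ and $\eps_0$, the flow is read directly off $B\tilde b$ and assembled in $O(\abs V)=O(n\eps_0^{-d}\log\Delta)$ time.

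The step I expect to be the main obstacle is the congestion bound in the lower‑bound argument -- in particular, pinning down exactly which subcells a given $E_2$‑ or $E_3$‑edge straddles, since cells and subcells at nearby levels are not perfectly nested. This is where the hypothesis that $\eps_0^{-1}$ is an even integer earns its keep: it makes the family $\set{S_v}$ genuinely laminar, so that each edge straddles only a contiguous block of $O(\log(1/\eps_0))$ levels and the quantities $\norm{v-v'}_2$ and $\Delta_{j^{\ast}}$ can be compared cleanly. Getting this bookkeeping right, and reconciling it with the weights $w_\ell$ wanted by the upper‑bound routing, is what pins down the final value of $\gamma$.
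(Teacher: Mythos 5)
Your upper bound (the explicit flow that pushes each $\tilde b_p$ along its $E_1$ edge, each subtree sum $s_\ell(v)$ up the $E_3$ edge to its parent, and resolves the residuals among $N_0$ inside $\cell_0$) is essentially the paper's construction and is fine; with the right weights it gives $\gamma=O(L/\eps_0)$ directly, with the root level contributing the $1/\eps_0$ and the other levels contributing the $L$, so no recursive sharpening inside cells is needed.

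The genuine gap is in the lower bound, and it is exactly the step you flagged. Your congestion argument needs every $E_2$ edge $e=(v,v')$ at level $\ell$ to satisfy $\sum_{j:\,e\in\partial S_j}w_j\le c(e)$, and you justify this by claiming that $v$ and $v'$ are separated only at levels $j$ with $\ell-j=O(\log(1/\eps_0))$, i.e.\ that once the subcells outgrow $\norm{v-v'}_2$ the two points merge into one subcell. That is false: whether $v$ and $v'$ lie in different level-$j$ subcells is determined by whether a level-$j$ subcell boundary of the (fixed, shifted) grid falls between them, not by their distance relative to $\eps_0\Delta_j$. Since the subcell grids are nested, a hyperplane that is a level-$0$ subcell boundary is also a level-$j$ subcell boundary for every $j$, so two adjacent level-$\ell$ net points at distance $\eps_0\Delta_\ell$ whose common face lies on such a hyperplane are separated at \emph{all} levels $0,\dots,\ell$. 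The resulting congestion is $\sum_{j=0}^{\ell}2w_j=\Theta(\eps_0^2\Delta_0)$ with your weights, while $c(e)=\eps_0\Delta_\ell=\eps_0\Delta_0 2^{-\ell}$, so the bound fails by a factor of order $\eps_0 2^{\ell}$; the lemma must hold for every shift, and such configurations occur. No rescaling of the $w_\ell$ by constants fixes this, because a single short edge can be charged by cuts at $\Theta(L)$ levels whose weights form a geometric series dominated by the coarsest one. The paper avoids charging an edge for several levels at once: it proves, for \emph{each fixed} $\ell$, that $\frac{\eps_0\Delta_\ell}{4}\sum_{\subcell\in\subgrid_\ell}\bigl|\sum_{u\in\subcell\cap V}\tilde b_u\bigr|$ is by itself a lower bound on the flow cost (each subcell boundary must be crossed by $|\tilde b(\subcell)|$ units of flow, each crossing edge has cost at least $\eps_0\Delta_\ell/2$, and summing over the subcells of one level double-counts each edge at most twice), and then takes the \emph{average} over the $L+1$ levels. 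This is why the paper's weights carry the extra $1/(L+1)$ factor, $B_{u,v}=\frac{\eps_0\Delta_\ell}{4(L+1)}$ for $u\in N_\ell$; adopting these weights, and replacing your global congestion bound by the per-level bound plus averaging, repairs the argument (and your upper-bound flow still yields $\gamma=O(L/\eps_0)=O(\log(\Delta)/\eps_0)$ against these weights).
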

\begin{proof}
  The matrix $B$ has one row associated with each vertex $u \in
  V$.  For $p \in P$, we let $B_{p,p} = \|p - N_L(p)\|_2$
  and $B_{p,u} = 0$ for any $u \neq p$. 
  Every other vertex $u \in V
  \setminus P$ is a net point. 
  Suppose that $u$ is in $N_\ell$; then for any $v \in V$
  such that $N_\ell(v) = u$, $B_{u,v} = \frac{\eps_0
    \Delta_\ell}{4(L+1)}$, and for all other $v$, $B_{u,v} = 0$. This
  defines the matrix $B$. Notice that each vertex $u \in V$
  contributes to at most $L+2$ nonzero entries of $B$: one for each
  $\ell \in \{0, \ldots, L\}$ corresponding to
  $N_\ell(u)$, and one more when $u\in P$. It is easy to
  verify directly that $B$ has full column rank. 

  The definition of $B$ guarantees that
  \begin{equation}\label{eq:emd-sketch}
  \|B\tilde{b}\|_1
  = 
  \sum_{p \in P}{|\tilde{b}_p|\|p - N_L(p)\|_2} +
  \frac{1}{L+1}\sum_{\ell = 0}^L  \frac{\eps_0 \Delta_\ell}{4}
  \sum_{\subcell \in \subgrid_\ell}
  \left|\sum_{u \in \subcell\cap V}\tilde{b}_u\right|.
  \end{equation}
  The remainder of the proof is broken down into several claims. For
  the first claim, let us extend the definition of
    $\cost(G, \cdot)$ to supplies which can be non-zero on net points
    as well by \[\cost(G, \tilde{b}) = \min\{\|f\|_c: f \in \R^{\vec{E}}, Af =
    \tilde{b}\}.\] 
  \begin{claim}
    $\|B\tilde{b}\|_1 \le \cost(G, \tilde{b})$.
  \end{claim}
  \begin{subproof}
    In any feasible flow $f$, the total cost of the flow on edges
    incident to points $p \in P$ must equal the first term in~\eqref{eq:emd-sketch}. Then, we
    just need to show that the remaining terms are a lower bound on
    the total cost of the flow on the remaining edges.  In fact we
    will show that, for each $\ell$,
    \begin{equation}\label{eq:precond-lb}
      \frac{\eps_0 \Delta_\ell}{4}
      \sum_{\subcell \in \subgrid_\ell}  \left|\sum_{u \in \subcell\cap V}\tilde{b}_u\right|
      \le
      \cost(G, \tilde{b}) -   \sum_{p \in P}{|\tilde{b}_p|\|p - N_L(p)\|_2}.
    \end{equation}
    It follows that the average of the term on the left over all
    $\ell$ is also a lower bound on the right hand side, which proves
    the claim. To establish inequality \eqref{eq:precond-lb} notice
    that, in any feasible $f$, flow of value at least $\left|\sum_{u
        \in \subcell\cap V}\tilde{b}_u\right|$ must enter or leave
    each $\subcell \in \subgrid_\ell$, and every edge $e$ from a
    vertex in $\subcell$ to a vertex outside of $\subcell$ has cost at
    least $c(e) \ge \eps_0 \Delta_\ell/2$ by the definition of
    $G$. Therefore, the total cost of flow leaving or entering
    $\subcell$ is at least
    $\frac{\eps_0 \Delta_\ell}{2} \left|\sum_{u \in \subcell\cap
        V}\tilde{b}_u\right|$. Adding up these terms over all
    $\subcell \in \subgrid_\ell$ accounts for the cost of any edge of
    $G$ at most twice, and \eqref{eq:precond-lb} follows.
  \end{subproof}

  \begin{claim}
    $\cost(G, \tilde{b}) \le  \gamma\|B\tilde{b}\|_1$
    for $\gamma = O(L/\eps_0) = O(\log(\Delta)/\eps_0)$.
  \end{claim}
  \begin{subproof}
    We prove the claim by constructing an appropriate feasible flow
    $f$.  We proceed to construct $f$ by levels, from the bottom
    up. The construction will be efficient, also proving the claim
    after ``moreover'' in the statement of the lemma. On level $L$,
    for any $p \in P$, we set $f_{pu} = \tilde{b}_p$, where $u =
    N_L(p)$. Then, for levels $\ell = L-1, L-2, \ldots, 1$, we execute
    the following procedure in every subcell $\subcell \in
    \subgrid_{\ell-1}$. Let us define, for any node $u$ and the
    current flow $f$, the surplus
    \[
    \delta(u,f) = 
    \sum_{v: (u,v) \in \vec{E}}{f_{uv}} - \sum_{v: (v,u) \in \vec{E}}{f_{vu}}
    - \tilde{b}_u
    =   \sum_{v: (u,v) \in {E}}{f_{uv}}
    - \tilde{b}_u.
    \]
    I.e.~this is how much more flow leaves $u$ than should, as
    prescribed by $\tilde{b}$. 
    Pick any two $u$ and $v$ in $N_\ell\cap \subcell$ such that
    $\delta(u,f) > 0 > \delta(v,f)$, and add $\min\{|\delta(u,f)|,
    |\delta(v,f)|\}$ units of flow to $f_{vu}$. Continue until we have
    that for all $u \in N_\ell \cap \subcell$ the surpluses
    $\delta(u,f)$ have the same sign. Since at every step of this procedure we make
    the surplus of at least one node $0$, and we always decrease the
    absolute value of the surplus at any node, we will stop after
    at most $|N_\ell \cap \subcell|$ steps. 
    Finally, for the node $u\in N_{\ell-1}$ which is the center of
    $\subcell$, and for all nodes $v \in N_\ell \cap \subcell$ for which
    $\delta(v,f) \neq 0$, we set $f_{uv} = \delta(v,f)$. After this
    final step, every node $u \in N_\ell \cap \subcell$ has surplus
    $0$. 

    For $\ell = 0$, we perform essentially the same procedure, but in
    the entire cell $\cell_0$. I.e.~we pick any two $u$ and $v$ in
    $N_\ell\cap \cell_0$ such that $\delta(u,f) > 0 > \delta(v,f)$, and
    add $\min\{|\delta(u,f)|, |\delta(v,f)|\}$ units of flow to
    $f_{vu}$. Once again, after at most $|N_0|$ steps all surpluses will
    have the same sign, and, as we show below, will in fact be $0$, so
    $f$ will be feasible. 
    
    An easy induction argument shows that, once we have processed levels
    $L, L-1, \ldots, \ell$, for any $\ell > 0$, the surplus at any node
    $u \in N_L \cup \ldots \cup N_\ell$ is $0$, and the same is true for
    the surplus at any $p \in P$. To show that the flow $f$ is feasible,
    it is enough to show that after processing the cell $C_0$ on level
    $0$, all nodes have surplus $0$. Notice that, while processing any
    subcell $\subcell$, we do not change the total surplus $\sum_{u \in
      V \cap \subcell}{\delta(u,f)}$.  This means that, for any $\ell
    >0$, after having processed a subcell $\subcell \in
    \subgrid_{\ell-1}$ with center $u$, we have 
    \begin{equation}\label{eq:surplus}
      {\delta(u,f)} = -\sum_{v \in  \subcell \cap V}{\tilde{b}_v}
      = -\sum_{v: N_{\ell-1}(v) = u}{\tilde{b}_v}.
    \end{equation}
    In particular, we have that before we
    start processing $\cell_0$, $\sum_{u \in N_0}{\delta(u,f)} = -\sum_{u
      \in V}{\tilde{b}_u} = 0$.  Since, once again, every step we make
    during the processing of $\cell_0$ preserves the total surplus, and
    we stop when all surpluses have the same sign, then at the time we
    are done it must be the case that every node has surplus $0$, and,
    therefore, $f$ is a feasible flow.
  
    Next we bound the cost of the flow $f$ constructed above.
    The first term on the right hand side of \eqref{eq:emd-sketch}
    accounts exactly for the cost of the flow on edges incident on
    points $p \in P$, as we already observed. 
    Let $0 < \ell \le L$.
    During each step executed while processing the subcell
    $\subcell \in \subgrid_{\ell-1}$, we pick some $u \in \subcell \cap
    N_\ell$, and add flow of value $\delta(u,f)$ along an edge of cost
    at most the diameter of $\subcell$, which is $\sqrt{d}\eps_0
    \Delta_\ell$. Moreover, the absolute value of $\delta(u,f)$ never
    increases while processing $\subcell$, and by \eqref{eq:surplus} it is
    bounded from above by 
    $\left|\sum_{v: N_{\ell}(v) = u}{\tilde{b}_u}\right|$ before
    $\subcell$ is processed. 
    Therefore, the cost added to the flow while processing
    $\subcell$ is bounded by 
    \[
    \sqrt{d}\eps_0\Delta_\ell  
    \sum_{u \in \subcell \cap N_\ell}\left|\sum_{v: N_{\ell}(v) = u}{\tilde{b}_u}\right|. 
    \]
    Adding up these upper bounds on the cost over all subcells in
    $\subgrid_{\ell-1}$ gives us that the total cost incurred while
    processing subcells on level $\ell-1$ is at most
    \[
    Z_\ell = \sqrt{d}\eps_0\Delta_\ell  
    \sum_{\subcell \in \subgrid_\ell}  \left|\sum_{u \in \subcell\cap V}\tilde{b}_u\right|.
    \]
    An analogous argument shows that the total cost incurred while
    processing $\cell_0$ is at most 
    \[
    Z_0 = \sqrt{d}\Delta_0  
    \sum_{\subcell \in \subgrid_0}  \left|\sum_{u \in \subcell\cap V}\tilde{b}_u\right|.
    \]
    Therefore, the cost of $f$ is bounded by 
    \[
    \sum_{p \in P}{|\tilde{b}_p|\|p - N_L(p)\|_2} + Z_L + \ldots + Z_0
    \le
    \frac{4\sqrt{d}(L+1)}{\eps_0} \|B\tilde{b}\|_1,
    \]
    where the inequality follows by a term by term comparison with
    \eqref{eq:emd-sketch}. 
  \end{subproof}
  
  The two claims finish the proof of the lemma, together with the
  observation that the construction of the flow $f$ can be implemented
  recursively in time $O({n}{\eps_0^{-d}}\log\Delta)$.
\end{proof}

Our main result for solving the minimum cost flow problem in $G$
follows.
\begin{theorem}\label{thm:flow}
  A flow $f$ feasible for the problem
  \eqref{eq:mincost-obj}--\eqref{eq:mincost-fc}, with cost at most
  $1+O(\eps)$ factor larger than the optimal cost, can be computed in time
  \[
  O\left(
    \frac{\log(\Delta)^2}{\eps^2_0}(|E|+
    |V|\log(\Delta))\log(|E|)\Biggl(\frac{1}{\eps^{2}} 
    +  \log\Bigl(\frac{\log(\Delta)}{\eps_0}\Bigr)\Biggr)
  \right).
  \]
\end{theorem}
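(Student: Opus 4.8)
The plan is to precondition the minimum cost flow LP on $G$ by the matrix $B$ of Lemma~\ref{lm:preconditioner}, run Sherman's solver (Corollary~\ref{cor:cto1-opt}) on the preconditioned instance, and then round the resulting nearly feasible flow to an exactly feasible one via Theorem~\ref{thm:composition}. Recall that \eqref{eq:mincost-obj}--\eqref{eq:mincost-fc} has been cast in the form \eqref{eq:minnorm-obj}--\eqref{eq:minnorm-constr} with $A$ the oriented vertex--edge incidence matrix of $G$, $b$ the supply vector, and $\|\cdot\|=\|\cdot\|_c$; note $\sum_u b_u=\sum_p\mu(p)=0$. As $B$ has full column rank, the constraint $Af=b$ is equivalent to $BAf=Bb$, so I would work with the matrix $BA$ and right-hand side $Bb$ throughout. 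Since $G$ is connected (each net point reaches $N_0$ through $E_3$, the points of $N_0$ are joined pairwise by $E_2$, and each $p\in P$ is joined to $N_L(p)$ by $E_1$), the column span of $A$ equals $\{\tilde b\in\R^V:\sum_u\tilde b_u=0\}$ and the column span $Y$ of $BA$ equals $\{B\tilde b:\sum_u\tilde b_u=0\}$.

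The first substantive step is to bound $\kappa\eqdef\kappa_{\|\cdot\|_c\to\|\cdot\|_1}(BA)$, where $\|\cdot\|_1$ is the $\ell_1$ norm on $Y$. Because $B$ is injective, $\min\{\|f\|_c:BAf=B\tilde b\}=\min\{\|f\|_c:Af=\tilde b\}$, and Lemma~\ref{lm:preconditioner} places this quantity in $[\,\|B\tilde b\|_1,\ \gamma\|B\tilde b\|_1\,]$ for $\gamma=O(\log(\Delta)/\eps_0)$. Hence Lemma~\ref{lm:apx-to-kappa}, applied with $H=BA$, gives $\kappa\le\gamma$.

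Next I would assemble the two solvers required by Theorem~\ref{thm:composition}. For the inner solver $\mathcal F$, invoke Corollary~\ref{cor:cto1-opt} on $(BA,\ Bb,\ \|\cdot\|_c)$ with accuracy parameter $\eps$ and feasibility parameter $\eps/\gamma^2$; since $\kappa\le\gamma$, this $\mathcal F$ is in particular a $(1+\eps,\ \eps\beta/\kappa)$-solver with respect to $\|\cdot\|_1$ for $\beta=1/\gamma$. For the rounding solver $\mathcal F'$, I would use the constructive ``moreover'' clause of Lemma~\ref{lm:preconditioner}: on input $BA$ and a residual $B\tilde b$, it outputs in time $O(n\eps_0^{-d}\log\Delta)$ a flow $f'$ with $Af'=\tilde b$ (so $BAf'=B\tilde b$, exactly feasible) and $\|f'\|_c\le\gamma\|B\tilde b\|_1\le\gamma\min\{\|g\|_c:Ag=\tilde b\}$, i.e.~$\mathcal F'$ is a $(\gamma,0)$-solver. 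Here one must check that every residual fed to $\mathcal F'$, namely $\tilde b=b-Af$ for the current flow $f$, satisfies $\sum_u\tilde b_u=0$ — true because $b$ does and the divergences of any flow sum to $0$ — so that Lemma~\ref{lm:preconditioner} applies. Theorem~\ref{thm:composition} with $C=\gamma$ then shows that $\mathcal F'\circ\mathcal F$ is a $(1+\eps+\gamma\cdot\eps/\gamma,\ 0)=(1+2\eps,0)$-solver, so it returns a flow feasible for \eqref{eq:mincost-obj}--\eqref{eq:mincost-fc} whose cost exceeds $\cost(G,\mu)$ by at most a $1+O(\eps)$ factor.

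Finally I would tally the running time, which is dominated by the single call to $\mathcal F$. By Corollary~\ref{cor:cto1-opt} this costs $O(\kappa^2(m+n+M)\log(m)(\eps^{-2}+\log(1/\beta_{\mathcal F})))$, with $m=|\vec E|=O(|E|)$, $n=|V|$, $\beta_{\mathcal F}=\eps/\gamma^2$, and $M=O(|E|+|V|\log\Delta)$ the cost of one matrix--vector product with $BA$ or $(BA)^\top$ (apply $A$, with $O(|E|)$ nonzeros, and $B$, with $O(|V|\log\Delta)$ nonzeros, in the appropriate order). Substituting $\kappa\le\gamma=O(\log(\Delta)/\eps_0)$ gives $m+n+M=O(|E|+|V|\log\Delta)$ and $\log(1/\beta_{\mathcal F})=O(\log(\log(\Delta)/\eps_0))+\log(1/\eps)$, and folding $\log(1/\eps)$ into the $\eps^{-2}$ term reproduces the claimed bound; the $O(n\eps_0^{-d}\log\Delta)=O(|V|)$ cost of the single call to $\mathcal F'$ is lower order. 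The step I expect to need the most care is making Lemma~\ref{lm:preconditioner} serve double duty cleanly — supplying both the condition-number bound (via Lemma~\ref{lm:apx-to-kappa}) and an exact $(\gamma,0)$-solver (via its constructive part) — and, relatedly, verifying that every residual handed to $\mathcal F'$ lies in the zero-sum subspace on which the lemma's guarantees hold.
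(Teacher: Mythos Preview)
Your proposal is correct and follows essentially the same route as the paper: precondition by $B$, use Lemma~\ref{lm:preconditioner} together with Lemma~\ref{lm:apx-to-kappa} to bound $\kappa_{\|\cdot\|_c\to\|\cdot\|_1}(BA)\le\gamma$, invoke Corollary~\ref{cor:cto1-opt} to obtain a $(1+\eps,\eps/\gamma^2)$-solver, and compose via Theorem~\ref{thm:composition} with the $(\gamma,0)$-solver furnished by the constructive part of Lemma~\ref{lm:preconditioner}. Your additional care in checking connectivity of $G$ and that residuals stay in the zero-sum subspace is welcome but not a departure from the paper's argument.
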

\begin{proof}
  We reformulate the problem as minimizing $\|f\|_c$ over $f \in
  \R^{\vec{E}}$ satisfying $BAf = Bb$.  Since $B$ has full column
  rank, the constraint is equivalent to the original constraint $Af =
  b$. Moreover, the time for applying $BA$ is bounded by the total
  number of nonzeros of $B$ and $A$, which is $O(|E| + |V|\log
  \Delta)$. By Lemma~\ref{lm:preconditioner} the assumption of
  Lemma~\ref{lm:apx-to-kappa} is satisfied with $H = BA$, $h = Bb$,
  $\|\cdot\|_X = \|\cdot\|_c$ and $\|\cdot\|_Y =
  \|\cdot\|_1$. Therefore, $\kappa_{\|\cdot\|_c \to \|\cdot\|_1}(H)
  \le \gamma = O(\log(\Delta)/\eps_0)$. This allows us to apply Corollary~\ref{cor:cto1-opt}
  and get a $(1+\eps, \eps\gamma^{-2})$-solver with respect to $\|\cdot\|_1$
  with the running time claimed in the theorem.
  To get a truly feasible flow $f$,
  we compose this solver with the algorithm in
  Lemma~\ref{lm:preconditioner} which, for a supplies vector
  $\tilde{b}$, computes a feasible flow with cost bounded by $\gamma
  \|B\tilde{b}\|_1\le \gamma\cdot \cost(G,\tilde{b})$, and is,
  therefore, a $(\gamma, 0)$-solver. By Theorem~\ref{thm:composition},
  this composed algorithm is a $(1 + 2\eps, 0)$-solver. The
  total running time is dominated by the solver guaranteed by
  Corollary~\ref{cor:cto1-opt}.
\end{proof}

\section{Generating a Transportation Map}
\label{sect:flow2transp}


\newcommand\nfp{uniform flow parity\xspace}

Theorem~\ref{thm:flow} guarantees we can obtain an approximately
optimal flow in our graph $G$ in nearly linear time.  We wish to
turn this flow into a transportation map on $P$.  To accomplish this,
we will repeatedly transform our flow into other flows, without
increasing the cost, and ending at a flow which is also a
transportation map. We will no longer keep the flow supported on the
edges of $E$, and will instead allow positive flow between arbitrary
pairs of points. We will gradually make sure that there is positive
flow only between points in $P$ (as opposed to net points).

We first begin by defining a notion we call \nfp. 

\begin{definition}\label{def:nfp}
  Given a flow $f \in \R^{V \times V}$, and a vertex $u\in V$, $f$ is
  said to satisfy, or have \nfp at $u$ if for all remaining $v\in V$,
  $f_{uv}$ has the same sign.  In other words, either there is flow
  going out of $v$ or there is flow going into $v$, but there is no
  flow passing through $v$.
  The flow $f$ is said to satisfy \nfp if $f$ has \nfp on every
  $v\in V$.
\end{definition}

The next easy lemma shows that a flow that satisfies \nfp is supported
on edges between vertices with non-zero divergence.

\begin{lemma}\label{lm:nfp}
  Suppose that a flow $f \in \R^{V \times V}$ satisfies \nfp at a
  vertex $u \in V$, and that $\sum_{v \in V}{f_{uv}} = 0$.
  Then $f_{uv} = 0$ for all $v \in V$.
\end{lemma}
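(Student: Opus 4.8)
The plan is to read off the conclusion directly from the definition of \nfp together with the hypothesis that the divergence at $u$ vanishes. Since $f$ has \nfp at $u$, all the numbers $\{f_{uv} : v \in V\}$ share a common sign; in particular, replacing $f$ by $-f$ if necessary (which changes neither the \nfp property at $u$ nor the hypothesis $\sum_{v\in V} f_{uv} = 0$, and whose conclusion is equivalent), we may assume $f_{uv} \ge 0$ for every $v \in V$. Then $\sum_{v \in V} f_{uv}$ is a sum of nonnegative terms that is equal to $0$, so every term must be $0$, i.e.\ $f_{uv} = 0$ for all $v \in V$.

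The only thing to be careful about is the exact meaning of ``same sign'' in Definition~\ref{def:nfp}: we interpret it as ``all $f_{uv} \ge 0$ or all $f_{uv} \le 0$'', with zero entries allowed under either alternative (a zero entry means no flow on that pair, which is consistent with both ``flow out of $u$'' and ``flow into $u$''). Under this reading the argument above goes through verbatim; I would add one sentence making this convention explicit so the reader is not tripped up by a strict-inequality reading, which would make the hypothesis of the lemma vacuous whenever some $f_{uv}=0$.

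There is essentially no obstacle here — the statement is a one-line consequence of ``a same-signed sum that vanishes has all summands zero.'' The lemma exists only to package this observation for later use, namely to conclude (in Lemma~\ref{lm:nfp}'s intended application) that a flow satisfying \nfp is supported on pairs of vertices that both have nonzero divergence. I would keep the write-up to two or three sentences and not belabor it.
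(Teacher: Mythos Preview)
Your proof is correct and essentially identical to the paper's: the paper phrases it as a contradiction (if some $f_{uv}\neq 0$ then some $f_{uw}$ has the opposite sign, violating \nfp), while you phrase it directly (a same-signed sum equal to zero has all terms zero), but these are the same observation. Your remark about the non-strict reading of ``same sign'' is a reasonable clarification, though the paper leaves it implicit.
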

\begin{proof}
  Assume, towards contradiction, that $f$ has $0$ divergence at $u$
  but $f_{uv} \neq 0$ for some $v \in V$. Then there must be some $w
  \in V$ such that $f_{uw}$ has the opposite sign to $f_{uv}$, or
  $\sum_{v \in V}{f_{uv}} = 0$ would not hold. This contradicts \nfp.
\end{proof}




We apply Lemma~\ref{lm:nfp} through the following corollary of it.
\begin{corollary}\label{cor:nfp}
  Let $f$ be a flow on $V\times V$. Suppose that
  for any $u \in V \setminus P$, $\sum_{v \in V}{f_{uv}} = 0$ and for
  any $p \in P$, $\sum_{v \in V}{f_{pv}} = \mu(p)$. Then, if $f$
  satisfies \nfp, it is a transportation map for $P$ and $\mu$.
\end{corollary}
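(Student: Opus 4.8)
The plan is to reduce Corollary~\ref{cor:nfp} to Lemma~\ref{lm:nfp} by checking that the divergence hypothesis of the lemma holds at every vertex of $V$. Concretely, the definition of a transportation map for $P$ and $\mu$ requires (i) that $f$ is supported only on pairs $(p,q)$ with $p,q\in P$, (ii) that $f$ is nonnegative in the appropriate orientation (i.e.\ flow only goes from $P_+$ to $P_-$), and (iii) that the divergence at each $p\in P$ equals $\mu(p)$. Condition (iii) is given by hypothesis, so the content of the corollary is establishing (i) and (ii) from the \nfp assumption.

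First I would observe that for every net point $u\in V\setminus P$ we are told $\sum_{v\in V}f_{uv}=0$, so $u$ has zero divergence, and since $f$ satisfies \nfp at $u$, Lemma~\ref{lm:nfp} immediately gives $f_{uv}=0$ for all $v\in V$. By anti-symmetry this also forces $f_{vu}=0$ for all $v$. Hence no net point carries any flow, and all flow is supported on pairs of points in $P$; this is exactly condition (i). Next, for condition (ii), consider any $p\in P_+$, so $\mu(p)\ge 0$. Its divergence is $\sum_{v\in V}f_{pv}=\mu(p)\ge 0$, and since $f$ has \nfp at $p$ all the $f_{pv}$ share a common sign; a common sign with nonnegative sum means they are all $\ge 0$, so $p$ only sends flow out. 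Symmetrically, for $q\in P_-$ the divergence is $\mu(q)<0$, so all $f_{qv}\le 0$, meaning $q$ only receives flow. Combined with (i), any positive flow value $f_{pq}>0$ must therefore have $p\in P_+$ and $q\in P_-$, which is condition (ii).

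Finally I would assemble these facts: setting $f'_{pq}=f_{pq}$ for $p\in P_+,q\in P_-$ gives a vector in $\R^{P_+\times P_-}$ that is nonnegative by (ii); the constraints $\sum_{q\in P_-}f'_{pq}=\mu(p)$ for $p\in P_+$ and $\sum_{p\in P_+}f'_{pq}=\mu(q)$ for $q\in P_-$ follow from the divergence hypotheses together with (i), since the sums over $V$ collapse to sums over the relevant part of $P$. Thus $f$ (restricted to $P_+\times P_-$) is a feasible transportation map for $(P,\mu)$.

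I do not anticipate a genuine obstacle here — the statement is a direct corollary and the proof is a short case analysis; the only mild subtlety is being careful with the sign conventions for the anti-symmetric flow $f$ and with the borderline case $\mu(p)=0$ for $p\in P_+$ (where \nfp still applies and simply forces $f_{pv}=0$ for all $v$ via the zero-divergence case of Lemma~\ref{lm:nfp}, which is consistent with the transportation map requirement). So the write-up will essentially be the three paragraphs above, compressed.
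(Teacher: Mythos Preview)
Your proposal is correct and follows essentially the same argument as the paper: apply Lemma~\ref{lm:nfp} at the zero-divergence net points to force support onto $P\times P$, then use \nfp at points of $P_+$ and $P_-$ to pin down the signs and conclude the flow is a transportation map. If anything, you are slightly more careful than the paper in handling the borderline case $\mu(p)=0$.
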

\begin{proof}
  By Lemma~\ref{lm:nfp}, $f$ is supported on $P \times P$. Moreover,
  since for any $p \in P_+$ we have $\sum_{q \in P}{f_{pq}} = \mu(p) >
  0$, and $f$ satisfies \nfp, it must be the case that $f_{pq} \ge 0$
  for all $q \in P$. Similarly, for any $p \in P_-$ it must be the
  case that $f_{pq} \le 0$ for all $q \in P$, or, equivalently,
  $f_{qp} \ge 0$ for all $q \in P$. It follows  that $f$ is in fact
  supported on $P_+ \times P_-$, and is, therefore, a transportation
  map. 
\end{proof}

Let us take a flow $f$ which is an approximately optimal solution to
\eqref{eq:mincost-obj}--\eqref{eq:mincost-fc}. We can extend $f$ to $V\times V$
by setting $f_{uv} = 0$ for $(u,v) \not \in E$. If we can
transform $f$ into another flow $f'$ without increasing its cost, so
that $f'$ satisfies \nfp, then we can use $f'$ as an approximately optimal
transportation map. 
Towards this goal, we define the Cancellation Procedure
\Call{Cancel-Vertex}{$f, u$}, given in Algorithm~\ref{alg:cancelv},
and analyzed in the following lemma. 

\begin{algorithm}[htp]
  \caption{The Cancellation Procedure}
  \label{alg:cancelv}
  \begin{algorithmic}[1]
    \Procedure{Cancel-Vertex}{$f, u$}
    \While{$\exists v, w: f_{vu} > 0 > f_{wu}$}
       \State $x = \min\{f_{vu}, f_{uw}\}$
       \State $f_{vu} \gets f_{vu} - x$
       \State $f_{uw} \gets f_{uw} - x$
       \State $f_{vw} \gets f_{vw} + x$
    \EndWhile
    \EndProcedure
  \end{algorithmic}
\end{algorithm}


\begin{lemma}\label{lm:cancelation}
  Let $f$ be a flow on $V \times V$, and $u \in V$. The while loop in
  \Call{Cancel-Vertex}{$f, u$} makes at most $|\{v: f_{uv} \neq 0\}|$
  many iterations, and, letting $f'$ be the flow after the procedure is
  called, we have the following properties:
  \begin{enumerate}
  \item All divergences are preserved, i.e.
    \[
    \sum_{w \in V}{f'_{vw}} = \sum_{w \in V}{f_{vw}}
    \]
    for all $v\in V$. 
  \item The flow   $f'$ satisfies \nfp at $u$. 
  \item If $f$ satisfies \nfp at some vertex $v \in V$, then so does
    $f'$.
  \item The size of the support of $f$ is at most that of $f'$.
  \item The cost of $f'$ with respect to the cost function $c(u,v) =
    \|u-v\|_2$ is at most the cost of $f$.
  \end{enumerate}
\end{lemma}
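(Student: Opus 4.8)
The plan is to analyze a single iteration of the while loop, establish an invariant, and then read off all five properties. First I would bound the number of iterations: in each iteration we pick $v,w$ with $f_{vu}>0>f_{wu}$ (equivalently $f_{uw}>0$), set $x=\min\{f_{vu},f_{uw}\}>0$, and decrease $f_{vu}$ and $f_{uw}$ each by $x$. Since $x$ equals one of the two, at least one of $f_{vu}$, $f_{uw}$ becomes $0$, so the set $\{v : f_{uv}\neq 0\}$ strictly shrinks in size each iteration (no previously-zero entry $f_{uv'}$ for $v'\notin\{v,w\}$ is touched, and $f_{vw}$ is not of the form $f_{u\cdot}$ unless $v$ or $w$ equals $u$, which is excluded by the loop guard since $f_{uu}=0$ by antisymmetry). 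Hence the loop runs at most $|\{v:f_{uv}\neq 0\}|$ times and terminates. Termination plus the loop guard immediately gives property (2): when the loop exits, there are no $v,w$ with $f_{vu}>0>f_{wu}$, i.e.\ all nonzero $f_{uv}$ (note $f_{uv}=-f_{vu}$) have the same sign, which is exactly \nfp at $u$.

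For property (1), I would check that a single iteration preserves the divergence at every vertex. The only four entries changed are $f_{vu},f_{uv}$ (by $\mp x$, maintaining antisymmetry), $f_{wu},f_{uw}$ (by $\pm x$), and $f_{vw},f_{wv}$ (by $\pm x$); concretely the update is $f_{vu}\mathrel{-}=x$, $f_{uw}\mathrel{-}=x$, $f_{vw}\mathrel{+}=x$, with the antisymmetric partners adjusted accordingly. The divergence $\sum_w f_{v'w}$ at $v$ changes by $(-x)+(+x)=0$ (the $f_{vu}$ term drops by $x$, the $f_{vw}$ term rises by $x$); at $u$ it changes by $(+x)+(-x)=0$ (here we use $f_{uv}$ rises by $x$ since $f_{vu}$ drops, and $f_{uw}$ drops by $x$); at $w$ it changes by $(+x)+(-x)=0$; at every other vertex nothing changes. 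So divergences are preserved iteration by iteration, hence overall. Property (4) follows from the same per-iteration accounting: each of $f_{vu}$ and $f_{uw}$ can only move toward $0$, possibly leaving the support, while $f_{vw}$ can enter the support but cannot leave it (it only increases when $f_{vw}\geq 0$ already, or it moves toward/away — more carefully, $f_{vw}$ increases by $x>0$, so if it was nonzero it stays nonzero unless it was negative and crosses zero; in that case $f_{wv}$ was positive and becomes... ). The main obstacle, I expect, is exactly this bookkeeping for property (4): one must be slightly careful that a cancellation on the pair $(v,w)$ does not shrink the support, which requires noting that the net effect is to route flow $v\to u\to w$ directly as $v\to w$, and a shortcut never creates more cancellation opportunities than it removes; I would phrase it as: the number of zero entries among $\{f_{vu},f_{uw},f_{vw}\}$ does not decrease, because $x=\min\{f_{vu},f_{uw}\}$ zeroes at least one of the first two, while $f_{vw}$ newly becoming zero would require $f_{vw}=-x$ beforehand, but then $f_{wv}=x=\min\{f_{vu},f_{uw}\}\le f_{uw}=-f_{wu}$, and one checks the count still works out — i.e.\ at worst one entry leaves the support and one enters, and generically one leaves and none enters.

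For property (3), fix a vertex $v_0\neq u$ at which $f$ has \nfp; I must show no iteration destroys it. An iteration only changes entries $f_{\cdot v_0}$ when $v_0\in\{v,u,w\}$. Since $v_0\neq u$, and in the case $v_0=v$ the entry $f_{v_0 u}=f_{vu}$ only decreases (staying $\geq 0$, matching its sign, as $x\le f_{vu}$), while $f_{v_0 w}$ increases by $x\geq 0$; since $f$ had \nfp at $v_0=v$, every nonzero $f_{v_0\cdot}$ had a common sign, and that sign must be nonnegative because $f_{v_0 u}=f_{vu}>0$ witnesses it — so increasing another coordinate by $x>0$ keeps all signs nonnegative. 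The case $v_0=w$ is symmetric ($f_{w u}=f_{wu}<0$ forces the common sign to be nonpositive, $f_{w v}$ decreases by $x$). So \nfp is preserved at $v_0$. Finally, property (5): the cost change in one iteration is $x\bigl(\|v-w\|_2 - \|v-u\|_2 - \|u-w\|_2\bigr)\le 0$ by the triangle inequality, since we remove $x$ units of cost-$\|v-u\|_2$ flow and $x$ units of cost-$\|u-w\|_2$ flow and add $x$ units of cost-$\|v-w\|_2$ flow. Summing over iterations gives $\cost(f')\le\cost(f)$, completing the proof.
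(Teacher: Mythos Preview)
Your proof is correct and follows the same per-iteration analysis as the paper: each iteration zeroes at least one edge incident to $u$ (bounding the loop), preserves all divergences, preserves \nfp at other vertices via the sign argument you give, and does not increase cost by the triangle inequality. Two minor remarks: your handling of property~(4) is more tangled than necessary---the paper's one-line argument is simply that each iteration removes at least one edge from the support (one of $(v,u),(u,w)$) and adds at most one (namely $(v,w)$), so the support cannot grow (note the lemma statement has $f$ and $f'$ swapped; both you and the paper prove the intended direction $|\mathrm{supp}(f')|\le|\mathrm{supp}(f)|$); and for property~(3) you explicitly assume $v_0\neq u$, but the case $v_0=u$ is already covered by your property~(2).
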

\begin{proof}
  To bound the number of iterations of the while loop, observe that
  every iteration sets the flow of at least one edge incident on $u$ to $0$,
  and that the flow on an edge incident on $u$ is modified only if it
  nonzero. We proceed to prove the other claims.

  The first claim follows because every iteration of the while loop of
  the procedure preserves the divergence $\sum_{w \in V}{f_{zw}}$ at every
  node $z \in V$: if $z = u$, then we remove $x$ units of flow going
  into $u$, and $x$ units of flow leaving $u$; otherwise, if $z = v$ or
  $z = w$ we just redirect $x$ units of flow entering or leaving $v$
  to another vertex; finally, if $z$ is any other vertex, then the
  flow through it is unaffected. 

  The fact that $f'$ satisfies \nfp on $u$ follows from the
  termination condition of the loop.

  To prove that \nfp is preserved, again we show that this is the case
  for any iteration of the while loop. For $u$, we observe that the
  the flow is unchanged if $u$ already satisfies \nfp. The only other
  vertices for which the flow on incident edges can be modified during
  an iteration are $v$ and $w$. For $v$, we have that $f_{vu} > 0$,
  so, if $v$ has \nfp, then the flow on every edge leaving $v$ must be
  non-negative. We decrease the value of $f_{vu}$ without making it
  negative (by the definition of $x$), and we increase the value of
  $f_{vw}$, which was non-negative to begin with. Therefore, \nfp is
  preserved for $v$. The argument for $w$ is analogous.

  To see that that the number of edges in the support of $f$
  does not increase, notice that every iteration of the while loop
  removes at least one edge from the support of $f$, and adds flow to
  at most one edge which previously had flow of $0$, namely $(v,w)$. 

  Finally, to show that the cost of the flow does not increase,
  observe that every iteration of the while loop replaces $x$ units of
  flow of total cost $x(\|v-u\|_2 + \|w - u\|_2)$ by $x$  units of
  flow of total cost $x \|w-v\|_2$, which is no larger than the
  original cost by the triangle inequality. 
\end{proof}

Suppose that we maintain $f$ in a sparse representation; namely, we
keep an adjacency list of the edges on which $f$ is not zero, with the
corresponding flow values. Then, after preprocessing the adjacency
list of $u$ in linear time to find the edges with positive and
negative flow,  every iteration of the while loop in
the Cancellation Procedure can be executed in constant time, and the
total running time, by Lemma~\ref{lm:cancelation}, is bounded by
$O(\deg_{f}(u))$, where $\deg_{f}(u)=|\{v: f_{uv} \neq 0\}|$.




Corollary~\ref{cor:nfp} and Lemma~\ref{lm:cancelation} imply that if
we apply the Cancellation Procedure to the flow $f$ and every vertex
of the graph $G$, then the resulting flow has cost no greater 
than that of $f$, is supported on a set of edges of size bounded by $|E|$, and is a transportation map.  In the
next theorem, which is our main result for constructing a
transportation map from a flow $f$ in $G$, we show that
if we apply the procedure first to netpoints in $N_L$, then to
$N_{L-1}$, etc., then the total running time
is nearly linear.

\begin{theorem}\label{thm:flow2transp}
  There exists an algorithm running in time 
  $O\left({n\eps^{-d}\log(\Delta)} + {\eps^{-2d}}{\log(\Delta)}\right)$ that,
  given as input a flow $f$ which is feasible for the minimum cost
  flow problem defined in Section~\ref{sect:min-cost-flow}, outputs a
  transportation map for $P$ and $\mu$ of cost no larger than that of
  $f$.
\end{theorem}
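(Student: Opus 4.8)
The plan is to apply the Cancellation Procedure level by level, processing netpoints from the finest level $N_L$ up to $N_0$, and to argue that after processing level $\ell$ the flow has no positive entries incident to any netpoint in $N_\ell \cup \ldots \cup N_L$, and moreover that the support of the flow is controlled tightly enough at each level that the total running time telescopes to nearly linear. First I would set up the invariant: before processing level $\ell$, every netpoint $u \in N_{\ell+1} \cup \ldots \cup N_L$ satisfies \nfp and (by Lemma~\ref{lm:nfp}, since such $u$ have zero divergence) is incident to no flow at all; consequently all remaining flow lives on edges among $P \cup N_0 \cup \ldots \cup N_\ell$. Processing a netpoint $u \in N_\ell$ with \Call{Cancel-Vertex}{$f,u$} costs $O(\deg_f(u))$ by the remark after Lemma~\ref{lm:cancelation}, makes $f$ satisfy \nfp at $u$, preserves \nfp at all previously processed vertices by item~3 of Lemma~\ref{lm:cancelation}, and does not increase the cost by item~5. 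After all of $N_\ell$ is processed, every $u \in N_\ell$ has \nfp and zero divergence, hence is flow-free, so the invariant advances to level $\ell-1$. When we finish level $0$, by Corollary~\ref{cor:nfp} the resulting flow is a transportation map of cost at most that of the original $f$.

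The crux is the running-time bound, which is where the hierarchical structure of $G$ earns its keep. When \Call{Cancel-Vertex}{$f,u$} redirects $x$ units from $(v,u)$ and $(u,w)$ onto $(v,w)$, it creates a (possibly) new edge $(v,w)$ where $v,w$ are both neighbors of $u$ in the current support. A netpoint $u \in N_\ell$ starts with neighbors only among: its parent in $N_{\ell-1}$, the $\eps_0^{-d}$ sibling netpoints in its cell's subgrid on level $\ell$, its $2^d$ child netpoints in $N_{\ell+1}$ (already flow-free, so not actually present), and — if $\ell = L$ — the points of $P$ assigned to it. But by the invariant, at the time we process level $\ell$ all of $N_{\ell+1},\ldots,N_L$ carry no flow, so the only relevant neighbors of $u$ are its parent and its $O(\eps_0^{-d})$ siblings, plus, on level $L$, the points of $P$ mapped to $u$. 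Crucially, the canonical-path / edge-set structure guarantees that cancellations at $u$ only ever create edges between two such neighbors, i.e.\ edges of the form (parent, sibling) or (sibling, sibling) or, at level $L$, (point, point) or (point, parent); in every case the new edge stays inside the "local neighborhood" of a single higher-level cell and does not blow up the degree of any other netpoint we have yet to process. I would make this precise by a structural claim: throughout the algorithm, the support of $f$ restricted to edges incident to $N_0 \cup \ldots \cup N_\ell$ is contained in a fixed set of size $O(n\eps_0^{-d} + |N_0 \cup \ldots \cup N_\ell|\,\eps_0^{-d})$, so that $\sum_{u \in N_\ell}\deg_f(u) = O(|N_\ell|\eps_0^{-d}) = O(|\grid_\ell|\eps_0^{-2d})$ at the moment level $\ell$ is processed, except that the points of $P$ contribute an extra $O(n)$ total across level $L$.

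Summing $O(|\grid_\ell|\,\eps_0^{-2d})$ over $\ell = 0,\ldots,L$ gives $O(n\,\eps_0^{-2d}\log\Delta)$ for the netpoint work (the same order as $|E|$), and the level-$L$ processing of points of $P$ adds $O(n)$ more; after identifying $\eps_0$ with $\eps$ up to the $\Theta(L)$ factor used elsewhere one gets the stated bound, with the $\eps^{-2d}\log\Delta$ term absorbing the work on levels where $|\grid_\ell|$ is already $\eps_0^{-d}$-saturated (i.e.\ the coarse levels, where $|\grid_\ell| = O(\eps_0^{-d})$ dominates). The main obstacle I anticipate is the structural claim controlling the support: one must verify that no new edge created by a cancellation at a vertex on level $\ell$ ever becomes incident to a not-yet-processed vertex on level $< \ell$ in a way that amplifies its degree beyond $O(\eps_0^{-d})$. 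The bottom-up processing order is exactly what makes this work — when we cancel at $u \in N_\ell$ the only vertices of strictly higher level (coarser) that are endpoints of new edges are parents, and each parent $u' \in N_{\ell-1}$ receives new edges only to the $O(\eps_0^{-d})$ children-siblings within its own cell, so $\deg_f(u')$ stays $O(\eps_0^{-d})$ until $u'$ is itself processed; I would nail this down with a short induction on the level, tracking the support within each cell of $\grid_{\ell-1}$ separately. Everything else — correctness, cost non-increase, and the feasibility/transportation-map conclusion — is a direct assembly of Lemma~\ref{lm:cancelation} and Corollary~\ref{cor:nfp}.
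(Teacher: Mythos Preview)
Your overall plan is exactly the paper's: process the netpoints level by level from $N_L$ up to $N_0$, invoke Lemma~\ref{lm:cancelation} for cost non-increase and \nfp preservation, and finish with Corollary~\ref{cor:nfp}. The correctness portion is fine.

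The gap is in your degree bound. You assert that for $u\in N_\ell$ with $\ell<L$, the only current neighbors of $u$ are its parent and its $O(\eps_0^{-d})$ siblings, and hence that $\deg_f(u)=O(\eps_0^{-d})$ and each parent ``receives new edges only to the $O(\eps_0^{-d})$ children-siblings within its own cell.'' This is not true: cancellations at finer levels \emph{pull points of $P$ upward}. You yourself note that at level $L$ one creates ``(point, parent)'' edges; those edges make points of $P$ neighbors of netpoints in $N_{L-1}$. Inductively, when you later cancel at some $v\in N_{\ell+1}$, its neighbor set already contains points $p\in P\cap C$ (where $C\in\grid_{\ell+1}$ is $v$'s cell), and cancellation at $v$ can create the edge $(p,\text{parent of }v)$, so those points become neighbors of netpoints in $N_\ell$. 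After all of level $\ell+1$ is processed, a netpoint $u\in N_\ell\cap C$ (with $C\in\grid_\ell$) can be adjacent to \emph{any} point of $P\cap C$.

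The paper's bound, which replaces your $O(\eps_0^{-d})$, is
\[
\deg_f(u)\ \le\ |P\cap C| + 2\eps_0^{-d},
\]
coming from the inclusion of $u$'s neighbor set in $(P\cap C)\cup(N_\ell\cap C)\cup(N_{\ell-1}\cap C')$, where $C'\in\grid_{\ell-1}$ is the parent cell. Summing over the $\eps_0^{-d}$ netpoints per cell and then over cells of $\grid_\ell$ (using $\sum_{C\in\grid_\ell}|P\cap C|=n$) yields the per-level cost; summing over the $L+1$ levels gives the theorem. Your ``structural claim'' about a fixed support of size $O(n\eps_0^{-d}+|N_0\cup\ldots\cup N_\ell|\eps_0^{-d})$ does contain an $n\eps_0^{-d}$ term, but you then drop it when concluding $\sum_{u\in N_\ell}\deg_f(u)=O(|N_\ell|\eps_0^{-d})$; that step is the precise place the argument fails.
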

\begin{proof}
  As discussed above, we (mentally) extend $f$ to all of $V \times V$ by defining
  $f_{uv} = 0$ if $(u,v) \not \in E$. 
  We  call \Call{Cancel-Vertex}{$f, u$} on $f$ and every net point $u
  \in N_L$ in arbitrary order; then we call it on every net point $u
  \in N_{L-1}$, and so on, finally calling it on every net point in
  $N_0$. By Lemma~\ref{lm:cancelation}, after these calls \nfp must
  hold at all net points. Moreover, it also holds at points
  in $P$, since in $G$ they only have a single edge incident on them,
  so initially \nfp holds trivially, and by
  Lemma~\ref{lm:cancelation} it is not destroyed by later calls to the
  Cancellation Procedure. Thus, the flow at termination has \nfp at all
  vertices, and, by Corollary~\ref{cor:nfp}, is a transportation
  map. Since calls to the Cancellation Procedure do not increase the
  cost of $f$, the cost of the transportation map is at most the cost
  of the initial flow. 
  
  Let us then consider a time step when the Cancellation
  Procedure has been applied to all net points in $N_{\ell+1} \cup
  \ldots \cup N_L$ and some points of $N_\ell$, but no points in $N_0
  \cup \ldots \cup N_{\ell-1}$. By Lemma~\ref{lm:nfp}, the flow $f$ at
  such a step is supported on edges incident on $N_0 \cup \ldots \cup N_\ell \cup
  P$. Recall that in $G$ net points on level $\ell$ only
  connect to other net points on the same level and in the same cell,
  and to parent net points on level $\ell-1$. Then
  we have that for any $\cell \in
  \grid_\ell$ and the cell $\cell' \in \grid_{\ell-1}$ containing it,
  and for any point $u \in N_\ell \cap \cell$, 
  $f_{uv}$ can be nonzero only for $v \in (N_\ell \cap \cell) \cup (P \cap
  \cell) \cup (N_{\ell-1} \cap \cell')\}$. Let
  $n_{\text{below}}(\cell)=\abs{P\cap\cell}$, and recall the notation
  $\deg_f(u) = |\{v \in V: f_{uv} \neq 0\}|$. We then have that
  $\deg_f(u) \le n_{\text{below}}(\cell) + {2}{\eps^{-d}}$, which,
  by Lemma~\ref{lm:cancelation}, also dominates the running time of
  the Cancellation Procedure on $u$. The total complexity of running
  the Cancellation Procedure on all points in $N_\ell$ is then
  dominated by
  \[
  \sum_{\cell \in \grid_\ell}{\frac{1}{\eps^d}(n_{\text{below}}(\cell) +
    \frac{2}{\eps^d})} 
  = O(\eps^{-d} n + \eps^{-2d}). 
  \]
  Summing the running times over the $L + 1 = O(\log \Delta)$ levels
  finishes the proof.
\end{proof}

Combining Theorem~\ref{thm:mincost-apx}, used with $\eps_0$ set to a
sufficiently small multiple of $\eps/L$, and
Theorems~\ref{thm:flow}~and~\ref{thm:flow2transp} gives the first
claim of Theorem~\ref{thm:main}, but with the approximation holding in
expectation. At the cost of increasing $\eps$ by a factor of $2$,
Markov's inequality shows that the approximation also holds with
probability at least $1/2$. Then the second statement in
Theorem~\ref{thm:main} follows from the first one, and
Lemma~\ref{lm:reduce-aspect}.






\bibliographystyle{alpha}
\bibliography{emd}

\newcommand{\etalchar}[1]{$^{#1}$}
\begin{thebibliography}{BvdPPH11}

\bibitem[ABRW18]{Altschuler-W2}
Jason Altschuler, Francis Bach, Alessandro Rudi, and Jonathan Weed.
\newblock Approximating the quadratic transportation metric in near-linear
  time.
\newblock {\em CoRR}, abs/1810.10046, 2018.

\bibitem[AFP{\etalchar{+}}17]{AgarwalFPVX17}
Pankaj~K. Agarwal, Kyle Fox, Debmalya Panigrahi, Kasturi~R. Varadarajan, and
  Allen Xiao.
\newblock Faster algorithms for the geometric transportation problem.
\newblock In {\em Symposium on Computational Geometry}, volume~77 of {\em
  LIPIcs}, pages 7:1--7:16. Schloss Dagstuhl - Leibniz-Zentrum fuer Informatik,
  2017.

\bibitem[AHK12]{AroraHK12}
Sanjeev Arora, Elad Hazan, and Satyen Kale.
\newblock The multiplicative weights update method: a meta-algorithm and
  applications.
\newblock {\em Theory of Computing}, 8(1):121--164, 2012.

\bibitem[ANOY14]{AndoniNOY14}
Alexandr Andoni, Aleksandar Nikolov, Krzysztof Onak, and Grigory Yaroslavtsev.
\newblock Parallel algorithms for geometric graph problems.
\newblock In {\em {STOC}}, pages 574--583. {ACM}, 2014.

\bibitem[Aro98]{Arora98}
Sanjeev Arora.
\newblock Polynomial time approximation schemes for euclidean traveling
  salesman and other geometric problems.
\newblock {\em J. {ACM}}, 45(5):753--782, 1998.

\bibitem[AS14]{AgarwalS14}
Pankaj~K. Agarwal and R.~Sharathkumar.
\newblock Approximation algorithms for bipartite matching with metric and
  geometric costs.
\newblock In {\em {STOC}}, pages 555--564. {ACM}, 2014.

\bibitem[BvdPPH11]{Bonneel11}
Nicolas Bonneel, Michiel van~de Panne, Sylvain Paris, and Wolfgang Heidrich.
\newblock Displacement interpolation using lagrangian mass transport.
\newblock {\em ACM Trans. Graph.}, 30(6):158:1--158:12, December 2011.

\bibitem[Cha02]{Charikar02}
Moses Charikar.
\newblock Similarity estimation techniques from rounding algorithms.
\newblock In {\em {STOC}}, pages 380--388. {ACM}, 2002.

\bibitem[CK93]{CallahanK93}
Paul~B. Callahan and S.~Rao Kosaraju.
\newblock Faster algorithms for some geometric graph problems in higher
  dimensions.
\newblock In {\em {SODA}}, pages 291--300. {ACM/SIAM}, 1993.

\bibitem[FKS84]{FKS84}
Michael~L. Fredman, J\'{a}nos Koml\'{o}s, and Endre Szemer\'{e}di.
\newblock Storing a sparse table with {$O(1)$} worst case access time.
\newblock {\em J. Assoc. Comput. Mach.}, 31(3):538--544, 1984.

\bibitem[GD04]{GraumanD04}
Kristen Grauman and Trevor Darrell.
\newblock Fast contour matching using approximate earth mover's distance.
\newblock In {\em Computer Vision and Pattern Recognition, 2004. CVPR 2004.
  Proceedings of the 2004 IEEE Computer Society Conference on}, volume~1, pages
  I--I. IEEE, 2004.

\bibitem[GV02]{GiannV02}
Panos Giannopoulos and Remco~C Veltkamp.
\newblock A pseudo-metric for weighted point sets.
\newblock In {\em European Conference on Computer Vision}, pages 715--730.
  Springer, 2002.

\bibitem[HP11]{Sariel-book}
Sariel Har-Peled.
\newblock {\em Geometric approximation algorithms}, volume 173 of {\em
  Mathematical Surveys and Monographs}.
\newblock American Mathematical Society, Providence, RI, 2011.

\bibitem[Ind07]{Indyk07}
Piotr Indyk.
\newblock A near linear time constant factor approximation for euclidean
  bichromatic matching (cost).
\newblock In {\em {SODA}}, pages 39--42. {SIAM}, 2007.

\bibitem[IT03]{IndykThaper03}
Piotr Indyk and Nitin Thaper.
\newblock {Fast Image Retrieval via Embeddings}.
\newblock In {\em 3rd International Workshop on Statistical and Computational
  Theories of Vision}, 2003.

\bibitem[LCL04]{QinCL04}
Qin Lv, Moses Charikar, and Kai Li.
\newblock Image similarity search with compact data structures.
\newblock In {\em Proceedings of the Thirteenth ACM International Conference on
  Information and Knowledge Management}, CIKM '04, pages 208--217, New York,
  NY, USA, 2004. ACM.

\bibitem[PA06]{PhillipsA06}
Jeff~M. Phillips and Pankaj~K. Agarwal.
\newblock On bipartite matching under the {RMS} distance.
\newblock In {\em {CCCG}}, 2006.

\bibitem[RTG00]{RubnerTG00}
Yossi Rubner, Carlo Tomasi, and Leonidas~J Guibas.
\newblock The earth mover's distance as a metric for image retrieval.
\newblock {\em International journal of computer vision}, 40(2):99--121, 2000.

\bibitem[SA12a]{SharathkumarA12-SODA}
R.~Sharathkumar and Pankaj~K. Agarwal.
\newblock Algorithms for the transportation problem in geometric settings.
\newblock In {\em {SODA}}, pages 306--317. {SIAM}, 2012.

\bibitem[SA12b]{SharathkumarA12-STOC}
R.~Sharathkumar and Pankaj~K. Agarwal.
\newblock A near-linear time {\(\epsilon\)}-approximation algorithm for
  geometric bipartite matching.
\newblock In {\em {STOC}}, pages 385--394. {ACM}, 2012.

\bibitem[Sch03]{Schrijver-volA}
Alexander Schrijver.
\newblock {\em Combinatorial optimization. {P}olyhedra and efficiency. {V}ol.
  {A}}, volume~24 of {\em Algorithms and Combinatorics}.
\newblock Springer-Verlag, Berlin, 2003.
\newblock Paths, flows, matchings, Chapters 1--38.

\bibitem[She17]{Sherman17}
Jonah Sherman.
\newblock Generalized preconditioning and undirected minimum-cost flow.
\newblock In {\em {SODA}}, pages 772--780. {SIAM}, 2017.

\bibitem[Vil03]{Villani-Topics}
C\'{e}dric Villani.
\newblock {\em Topics in optimal transportation}, volume~58 of {\em Graduate
  Studies in Mathematics}.
\newblock American Mathematical Society, Providence, RI, 2003.

\end{thebibliography}

\appendix
\section{Proof of Lemma~\ref{lm:distort}}

  The first claim follows from the triangle inequality, since
  $\dist_G(p,q)$ is the length of a polygonal path between $p$ and
  $q$, and $\|p-q\|_2$ is the length of the straightline segment
  between them. 

  For the second claim, it is enough to show that the canonical path
  between $p$ and $q$ has the required length. Observe first that, if
  $u = N_{\ell-1}(v)$ is the parent net point of some $v\in N_\ell$,
  then $\|u-v\|_2=\frac{\sqrt{d}\eps_0\Delta_\ell}{2}$.  Similarly, if
  $r\in P$, then
  $\|r-N_\ell(r)\|_2\le\frac{\sqrt{d}\eps_0\Delta_\ell}{2}$.  This
  shows that
  \begin{align*}
    \|N_{\ell(p,q)}(p)-N_{\ell(p,q)}(q)\|_2&\leq\|N_{\ell(p,q)}(p)-p\|_2
    +\|p-q\|_2+\|q-N_{\ell(p,q)}(q)\|\\
    &\le\|p-q\|_2+\sqrt{d}\eps_0\Delta_{\ell(p,q)}.
  \end{align*}

  Using the canonical path to bound the shortest path distance between
  $p$ and $q$ in $G$, we see that 
\begin{align*}
\dist_G(p,q)&\leq\|p-N_L(p)\|_2+\sum\limits_{\ell=\ell(p,q)+1}^L
\|N_\ell(p)-N_{\ell-1}(p)\|_2\\
&+\|N_{\ell(p,q)}(p)-N_{\ell(p,q)}(q)\|_2+\sum\limits_{\ell=\ell(p,q)+1}^L
\|N_{\ell-1}(q)-N_\ell(q)\|_2+\|N_L(q)-q\|_2\\
&\le\|p-q\|_2+\sqrt{d}\eps_0\left(\Delta_{\ell(p,q)}+\Delta_L+
\sum\limits_{\ell=\ell(p,q)+1}^L\Delta_\ell\right)\\
&=\|p-q\|_2+\sqrt{d}\eps_0\left(2^{1-\ell(p,q)}\Delta+2^{1-L}\Delta+
\sum\limits_{\ell=\ell(p,q)+1}^L2^{1-\ell}\Delta\right)\\
&\leq\|p-q\|_2+\sqrt{d}\eps_0\Delta(2^{1-\ell(p,q)}+2^{1-L}+
2^{1-\ell(p,q)})\\
&\leq\|p-q\|_2+3\sqrt{d}\eps_0\Delta_{\ell(p,q)}
\end{align*}

This shows that, conditional on ${\ell(p,q)}$,
$\dist_G(p,q)=\|p-q\|_2+O(\eps_0\Delta_{\ell(p,q)})$.
All that remains is to find the expected value of $\dist_G(p,q)$
by taking expectations over the 
distribution of ${\ell(p,q)}$.
We get that the expected value of $\dist_G(p,q)$ is
\begin{align*}
\E[\dist_G(p,q)]&\le \|p-q\|_2+3\sqrt{d}\eps_0\E[\Delta_{\ell(p,q)}]\\
&=\|p-q\|_2+3\sqrt{d}\eps_0\sum\limits_{\ell=0}^{L-1}\Pr({\ell(p,q)}=\ell)
\Delta_{\ell}\\
&=\|p-q\|_2+3\sqrt{d}\eps_0\sum\limits_{\ell=0}^{L-1}\Pr(\cell_\ell(p)=\cell_\ell(q)
\wedge\cell_{\ell+1}(p)\neq\cell_{\ell+1}(q))\Delta_{\ell}\\
&\leq\|p-q\|_2+3\sqrt{d}\eps_0\sum\limits_{\ell=0}^{L-1}
\Pr(\cell_{\ell+1}(p)\neq\cell_{\ell+1}(q))\Delta_{\ell}\\
&=\|p-q\|_2+3\sqrt{d}\eps_0\sum\limits_{\ell=1}^{L}
\Pr(\cell_{\ell}(p)\neq\cell_{\ell}(q))\Delta_{\ell}\\
&\le\|p-q\|_2+3\sqrt{d}\eps_0
\sum\limits_{\ell=1}^L\frac{\sqrt{d}\|p-q\|_2}{\Delta_\ell}\Delta_{\ell}\\
&=(1+3{d}\eps_0L)\|p-q\|_2.
\end{align*}
The final inequality above follows from Lemma~\ref{lm:sep-prob}.
This completes the proof of the lemma.

\section{Sherman's Algorithms}

To make our paper more self-contained, in this section we briefly
sketch a version of the algorithm from Theorem~\ref{thm:1to1-opt}
which uses the multiplicative weight update method~\cite{AroraHK12} in
a simple way, but achieves a slightly worse running time than the one
stated in the theorem. We do not claim any novelty, and everything in
this section is due to Sherman.

One of Sherman's main results is an analysis of how composing rough
solvers affects the error in the linear constraints. In particular,
suppose that $\mathcal{F}$ is a $(1+\eps,
\frac{\eps}{2\kappa})$-solver, $\mathcal{G}$ is $(2,
\frac{1}{2\kappa})$-solver, and $\mathcal{G}^t$ is $\mathcal{G}$
composed with itself $t$ times. (The notion of composition here is the
same one used in Theorem~\ref{thm:composition}.) Then, the composition
$\mathcal{G}^t \circ \mathcal{F}$ is a $(1 + O(\eps), \frac{\eps
  2^{-t-1}}{\kappa})$-solver. This reduces designing the solver from
Theorem~\ref{thm:1to1-opt} to designing a $(1+\eps, \eps)$-solver
that performs $O(\eps^{-2} \log(m))$ multiplications with $A$ or
$A^\top$ (and possibly lower-order operations). We will sketch how to
get a solver whose complexity is instead
$O(\eps^{-2}\log(m)\log_{1+\eps}(\kappa))$ multiplications with $A$ or
$A^\top$.

By scaling, we can assume that $\|A\|_{\|\cdot\|_1 \to \|\cdot\|_1}
\le 1$, i.e.~every column of the matrix $A$ has $\ell_1$-norm bounded
by $1$. It then follows that for any $f$ that satisfies the constraint
$Af = b$, we have $\|f\|_1 \ge \|b\|_1$, i.e.~the optimal value of the
optimization problem \eqref{eq:minnorm-obj}--\eqref{eq:minnorm-constr}
is at least $\|b\|_1$. Moreover, if the condition number of $A$ is
$\kappa$, then there exists a map $S$ from the span of $A$, equipped
with $\ell_1$, to $\R^m$, equipped with $\ell_1$, such that $\|S\|\le
\kappa$ and $AS(b) = b$. It follows that $\|S(b)\|_1 \le \kappa
\|b\|_1$, so the optimal value lies in $[\|b\|_1, \kappa \|b\|_1]$. We
can then guess a value in this range (up to a factor of $1+\eps$), and
reduce the problem to making $O(\log_{1+\eps} \kappa)$ calls to an
algorithm solving the feasibility problem $\{f: \|f\|_1 \le t, Ax =
b\} \neq \emptyset$. This is the feasibility problem that we will
approximately solve using the multiplicative weights framework.

In fact, it will be convenient to reformulate the above feasibility
problem as $\{f: \|f\|_1 \le 1, Ax = b/t\} \neq \emptyset$. In order
to get a $(1+\eps, O(\eps))$-solver, it is enough to output an $f$
such that $\|f\|_1 \le 1$ and $\|Af - \frac1t b\|_1 \le \eps$ whenever
the problem is feasible. We will write $f = f^+ - f^-$, where both
$f^+$ and $f^-$ have non-negative coordinates and $\sum_{i =
  1}^m{f^+_i} = \sum_{i = 1}^m{f^-_i} = 1$. We initialize $f^+$ and
$f^-$ to $f^+_i = f^-_i = \frac1m$ for all $i$, and we update them
iteratively using the multiplicative weights update method. In each
step of the algorithm, we either have $\|Af - \frac1t b\|_1 \le \eps$ and we
are done, or we can find a vector $y \in \{-1, 1\}^n$ such that
$y^\top (Af - b) > \eps$. We update
\begin{align*}
f^+_i &= \frac{e^{-\eta (A^\top y)_i}}{\sum_{j = 1}^me^{-\eta (A^\top  y)_j}},
&f^-_i = \frac{e^{\eta (A^\top y)_i}}{\sum_{j = 1}^me^{\eta (A^\top y)_j}},
\end{align*}
where $\eta$ is a parameter. If we choose $\eta = \frac{\eps}{2}$, a
standard analysis of the multiplicative weights
method~\cite{AroraHK12} shows that, after $O(\eps^{-2} \log m)$
iterations, we would either have $\|A(f^+ - f^-) - \frac1t b\|_1 \le
\eps$, as desired, or the average $\bar{y}$ of all vectors $y$ computed so far
satisfies $\pm\bar{y}^\top a_i > \bar{y}^\top b$ for every column $a_i$
of $A$, certifying that the problem is infeasible. 

\end{document}